\DeclarePairedDelimiter{\ceil}{\lceil}{\rceil}
\crefname{hypothesis}{Hypothesis}{Hypotheses}
\title{Probabilistic Gathering of Agents with Simple Sensors\thanks{Submitted to the editors April 12, 2020.
}}
\author{Ariel Barel\thanks{Department of Computer Science, Technion - Israel Institute of Technology, Haifa, ISRAEL 
  (\email{arielba@technion.ac.il}, \url{http://www.cs.technion.ac.il/people/arielba/}
  ).}
\and Thomas Dag\`es\footnotemark[2] \and Rotem Manor\footnotemark[2] \and Alfred M. Bruckstein\footnotemark[2]}
\newcommand*{\addFileDependency}[1]{
  \typeout{(#1)}
  \@addtofilelist{#1}
  \IfFileExists{#1}{}{\typeout{No file #1.}}
}
\newcommand*{\myexternaldocument}[1]{%
    \externaldocument{#1}%
    \addFileDependency{#1.tex}%
    \addFileDependency{#1.aux}%
}
\begin{document}
\maketitle

\begin{abstract}
  Gathering is a fundamental task for multi-agent systems and the problem has been studied under various assumptions on the sensing capabilities of mobile agents. This paper addresses the problem for a group of agents that are identical and indistinguishable, oblivious, and lack the capacity of direct communication. At the beginning of unit-time intervals, the agents select random headings in the plane and then detect the presence of other agents behind them. Then they move forward only if no agents are detected in their sensing ``back half-plane''. Two types of motion are considered: when no peers are detected behind them, either the agents perform unit jumps forward, or they start to move with unit speed while continuously sensing their back half-plane, and stop whenever another agent appears there. For the first type of motion extensive empirical evidence suggests that with high probability clustering occurs in finite expected time to a small region with diameter of about the size of the unit jump, while for continuous sensing and motion we can prove gathering in finite expected time if a ``blind-zone'' is assumed in their sensing half-plane. Relationships between the number of agents or the size of the blind-zone and convergence time are empirically studied and compared to a theoretical upper-bound dependent on these factors.
\end{abstract}

\begin{keywords}
  Control, Decentralized, Gathering, Multi-Agent, Simple Sensors
\end{keywords}

\begin{AMS}
  93A14, 93C10, 93E15
\end{AMS}

\section{Introduction}
This paper deals with gathering of multi-agent systems, based on a simple decentralized control law. Agents move according to local information provided by their sensors. The agents are assumed to be identical and indistinguishable, memoryless (oblivious), with no explicit communication between them. They do not have a common frame of reference (i.e. agents are not equipped with GPS sensors or compasses). This paper is an extended and improved version of a preprint that appeared on arXiv in 2018 \cite{barel2018previousSimple}.

A wealth of gathering algorithms were described and analyzed in the multi-agent robotics literature. They differ in the assumptions made on the sensing that is performed by the agents, in the assumptions on the possible moves that can be made and the computational requirements for the decision process that leads to the response \cite{bruckstein1991ants,suzuki1999distributed,ando1999,jadbabaie2003,gazi2003stability,gordon2004,gordon2005,olfati2006flocking,ji2007,olfati2007consensus,gordon2008,bellaiche2015,manor2016Chase}. 

A recent report \cite{barel2019come} surveyed in detail gathering or clustering algorithms under the assumptions of limited vs. unlimited visibility sensing, complete relative position sensing vs. bearing only information provided by the sensors, and discrete vs. continuous motion schedules for the agents. We here present a novel gathering, or geometric consensus algorithm based on a simple randomized rule of motion for agents with headings that can only sense the presence or absence of other agents in a half-plane behind them. We assume that the agents have intrinsic (but not global) orientation and they can move forward, but each agent can, at various instances, select a new heading at random. This is accomplished by doing an independent and uniformly distributed turn over $[0,2\pi]$ from their current orientation.

Under these assumptions, we consider two different gathering algorithms. The first one assumes that agents make a forward jump of size $1$ whenever there are no agents behind them, i.e. in their back half-plane. The agents act synchronously and new headings are selected at random and independently at each unit-time, then their ``back half-plane'' sensor's reading tells the agents whether to jump forward or stay put. Extensive experimental results with this process shows that probabilistic gathering to a small region occurs in time proportional to the number of agents. We cannot prove this result yet.

The second gathering algorithm we discuss is a continuous version of this process. Here we assume that the sensing is continuously done, and the forward motion of agents during each unit time-interval is continuously conditioned on the absence of agents in the sensing area behind them. As we shall see, in this case we also need to assume a blind-zone for the backwards sensing, in order to avoid dead-lock situations preventing gathering. Under these new assumptions we can prove gathering in finite expected time, and obtain a bound on the gathering time dependent on the number of agents, the initial spread of the group, and the size of the blind-zone in sensing.

Extensive experiments are performed for both gathering algorithms. Results not only confirm our theory but also provide further empirical insight into the behaviour of the swarms. The code has been made public\footnote{\url{https://github.com/Tommoo/SwarmGatheringHalfPlanes}}.


\section{The discrete algorithm}
Consider a system of $n$ identical, anonymous, and memoryless agents specified by their time varying locations in the euclidean plane $\{p_i(k)\}_{i=1,2,...,n} \in  \mathbb{R}^2$ and heading vectors $\{\hat\theta_i(k)\}_{i=1,2,...,n}$ which are unit vectors randomly selected on the unit circle. These quantities are unknown to the agents themselves as they lack global position and orientation information. We define ``heading" as the direction where the agent's nose is pointing, i.e. its current direction of (possible) motion. The agents implicitly interact with each other in such a way that an agent's next position after one time unit $p_i(k+1)$ is determined by the constellation of all the agents in the system.

\subsection{Sensing}
Each agent is equipped with an onboard sensing device, aimed in the opposite direction to the agent's heading $\hat{\theta}_i(t)$, covering the back half-plane (with $180\degree$ field of view). We may call this sensing device a ``Backward Looking Binary Sensor". If there is no other agent in the field of view of agent $i$, the output signal is $s_i(k) = 1$, else $s_i(k) = 0$.

\subsection{Timing and the Motion Law}\label{DynamicLawWording}
At time $k=0$ the agents are in an arbitrary initial constellation with randomly selected headings, and perform forward jumps if their sensor reading is $1$. Then at each time-step $k$ every agent changes its heading direction by choosing a uniformly distributed random turn from $[0,2\pi]$ resulting in a new random heading $\theta_i(k)$, and then, if its back closed half-plane is empty, it jumps forward a fixed step-size $d=1$. Otherwise it stays put until the next time-step.

The dynamic motion law is formally described as follows. Denote the locations of the $n$ agents at time-step $k$ by $\{p_1(k), p_2(k), ..., p_n(k)\}$. Then:
\begin{align}
    \begin{split}
        p_i(k+1) &= p_i(k) + \begin{bmatrix} \cos(\theta_i(k)) \\ \sin (\theta_i(k)) \end{bmatrix}  s_i(k)\\
        with\quad 
       	s_i(k) &= 
       	    \begin{cases}
       	        0, & \exists~j : \hat{\theta}^\intercal_i(k)[p_j(k)-p_i(k)] \leq 0\\
       	        1, & otherwise
       	    \end{cases} 
    \end{split}
    \label{eq:Dynamics}
\end{align}
is the binary output from the sensor as seen in \cref{fig:PositiveNegative}, and $\hat{\theta}_i(k)=\bigl[\begin{smallmatrix}\cos({\theta_i(k)}) \\ \sin({\theta_i(k)})\end{smallmatrix}\bigr]$ is the agent's random heading.

\begin{figure}[tbhp]
  \centering
    \includegraphics[width=70mm]{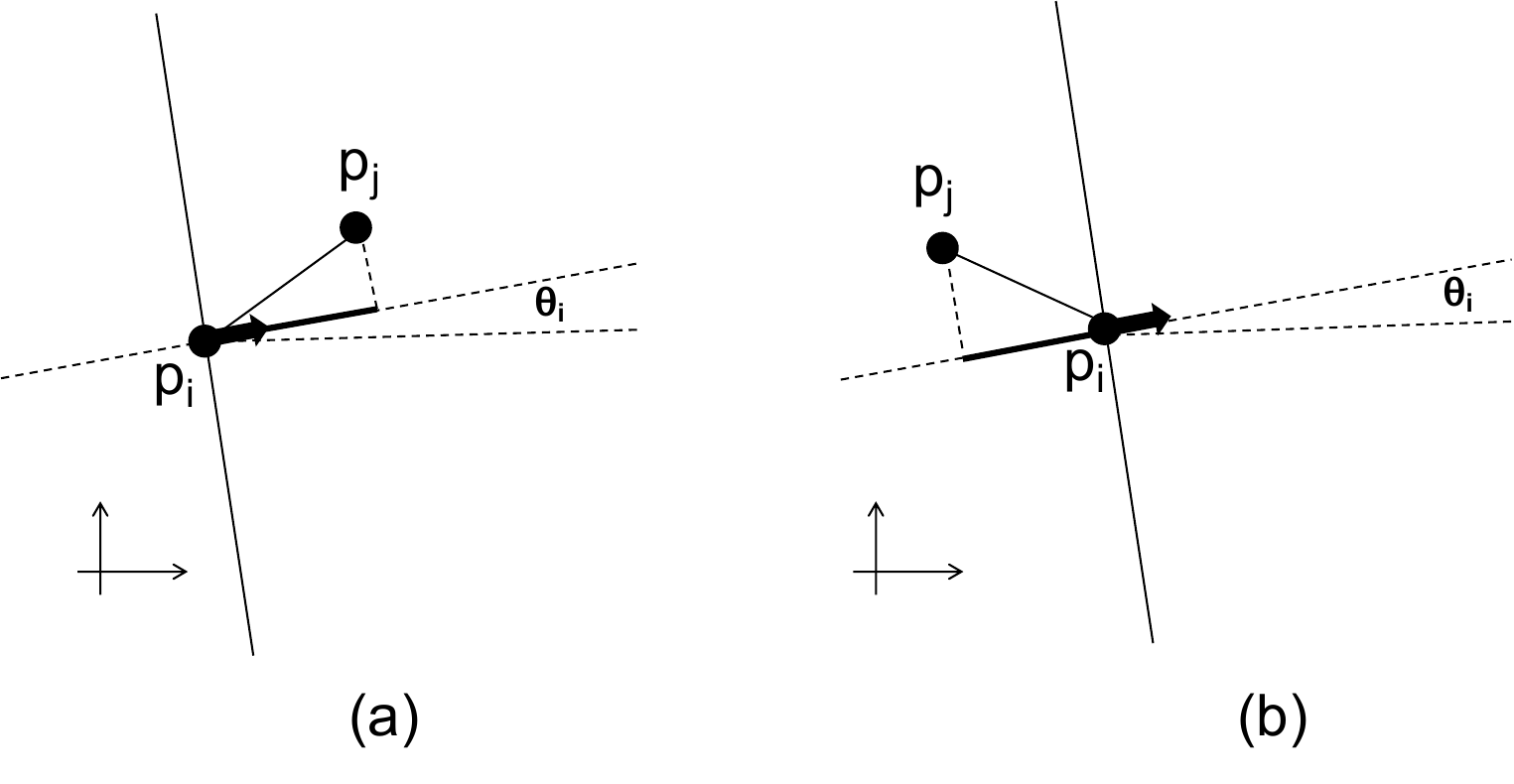}
    \caption{Sensing geometry: The heading of an agent is presented by an arrow, and the product $ \hat{\theta}_i^\intercal[p_j-p_i]$ is marked in bold line. If (a) agent $j$ is located in front of agent $i$ the product is positive, and if (b) agent $j$ is in its back the product is negative. If all products are positive $s_i(k)=1$ otherwise $s_i(k)=0$.}
    \label{fig:PositiveNegative}
\end{figure}

\subsection{Simulation Results and System Behavior}

Typical simulation results of gathering are shown in \cref{fig:discrete evolution}. In all the simulations we ran, a number of agents are randomly placed in the plane, and in all cases the system converges to an area within a circular region of radius less than $1$, whose ``center" wanders at random in the plane. Empirically, once we have reached convergence, this radius seems to randomly fluctuate around the value $\frac{1}{2}$ with variance decreasing with the number of agents. Furthermore, once convergence has been achieved, any departure from the unit disk implies a radius lower than $2$ due to the rules of motion, and empirically it decreases again to go back to convergence within a disk of radius at most $1$. This phenomenon seldom happens and only on systems with very few agents, typically less than $5$.

We found that, for estimating the expected time of convergence, a choice of $1{,}000$ runs provides a reliable estimator. A detailed analysis of this choice is presented as supplementary material in \cref{an subsec: Discrete analysis nb runs}. \Cref{ConvergenceTime_VS_n} summarizes $1{,}000$ simulation runs with different number of agents, spread uniformly over the same initial area. Notice that the effect of the number of agents on the convergence time of the system is linear, and we found a slope of approximately $20$ unit time-steps per number of agents.

\begin{figure}[tbhp]
    \centering
    \subfloat[$n=5$]{\label{fig:qty5stp1Range50}\includegraphics[width=0.45\textwidth]{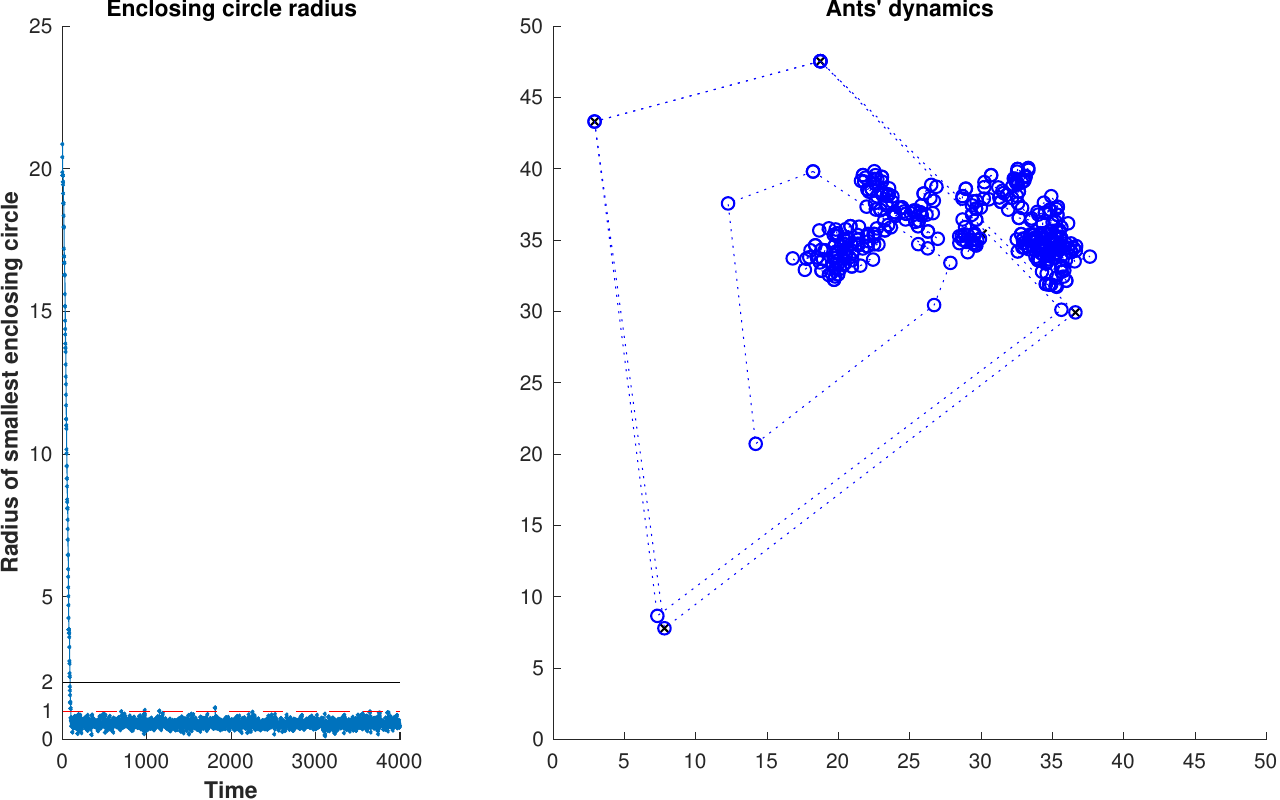}}
    \subfloat[$n=50$]{\label{fig:qty50stp1Range50}\includegraphics[width=0.45\textwidth]{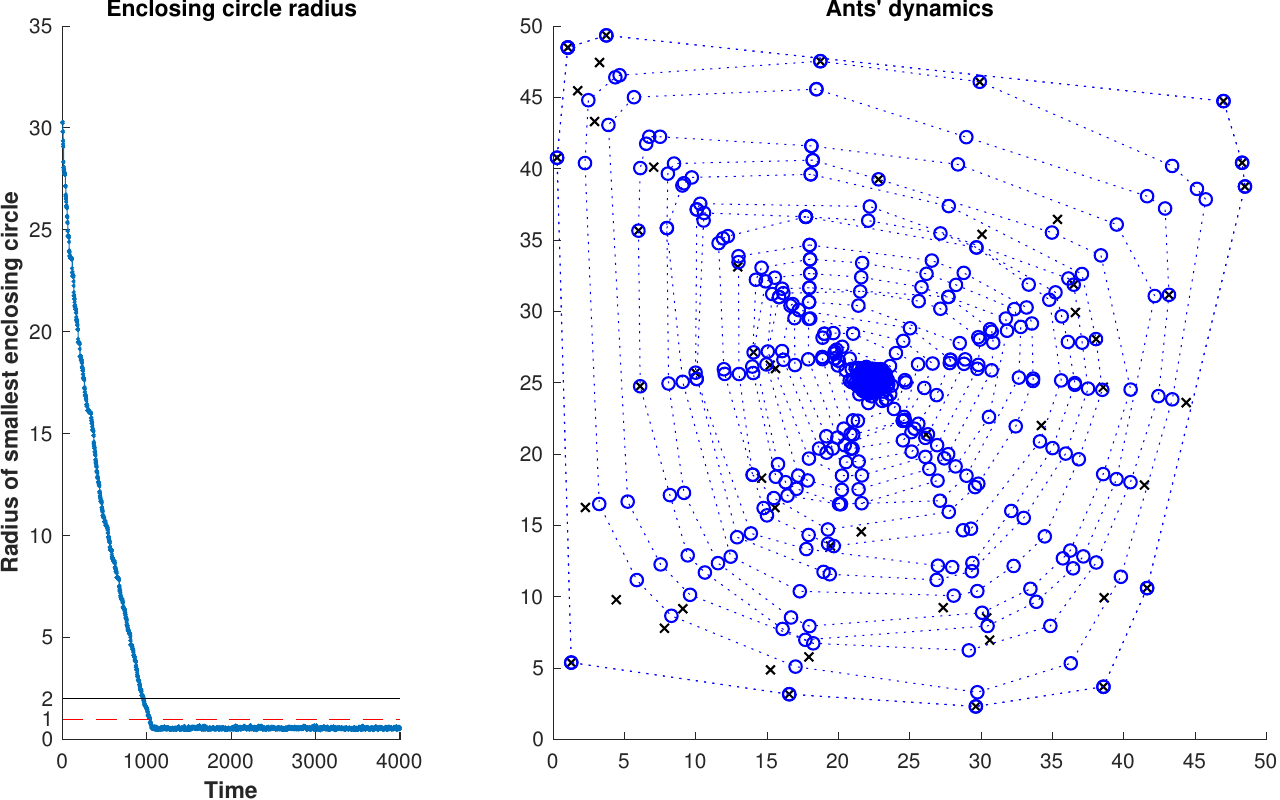}}\\    
    \subfloat[$n=150$]{\label{fig:qty150stp1Range50}\includegraphics[width=0.45\textwidth]{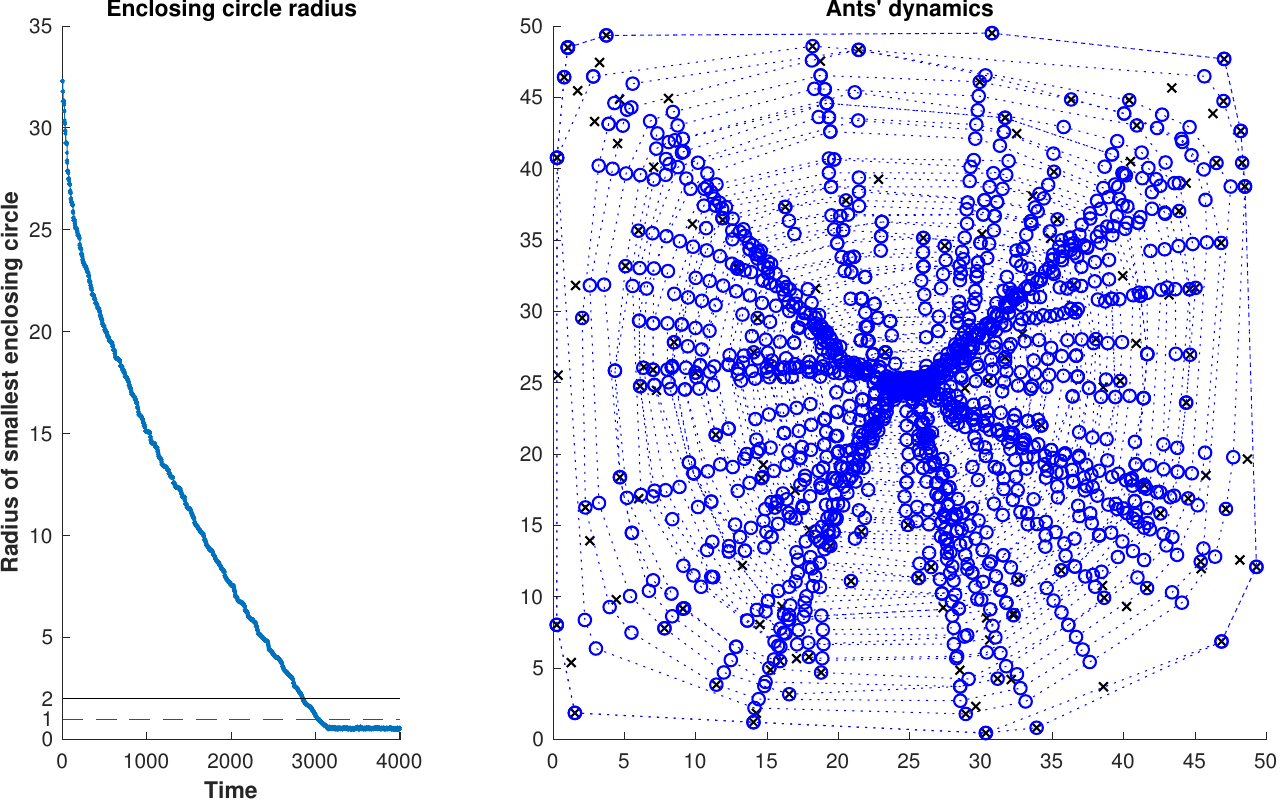}}
    \caption{Simulation results on $n\in\{5,50,150\}$ agents for the discrete dynamics algorithm, with initial random spread in a $50$ by $50$ area and step-size $1$. On the right part of each figure,
    the convex-hull of the system and its associated agents are printed every $50$ time-steps (marked with dashed lines and $\circ$). The initial position of the agents is marked with $\times$. The curve on the left part of each figure shows the decrease in the radius of the smallest enclosing circle of the agents constellation.}
    \label{fig:discrete evolution}
\end{figure}






\begin{figure}[tbhp]
  \centering
    \includegraphics[width=85mm]{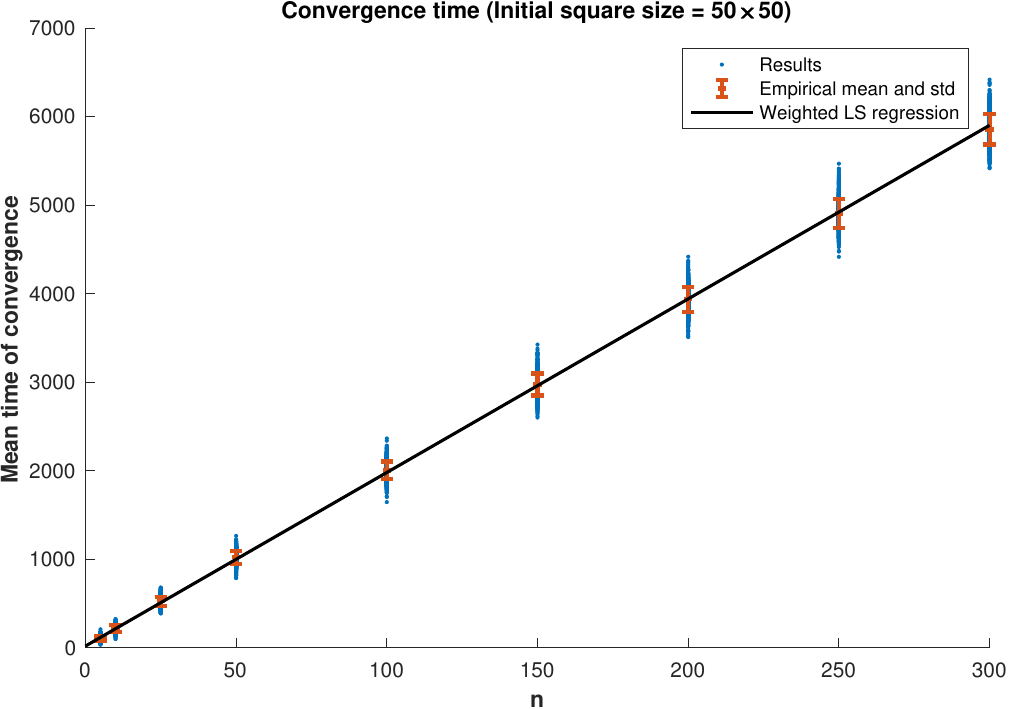}
    \caption{Convergence time vs. number of agents. All simulations were set to an initial random spread of $50$ by $50$ area and step-size $1$. Convergence time was taken for the first time when all agents were gathered in a circle of radius $1$. This process ran for $1{,}000$ repetitions with different random initial constellations. We superimpose on all results the empirical mean and standard deviation of the results. The linear graph was obtained using weighted linear least squares fitting to the average results, with weights equal to the squared inverse of the empirical standard deviation of the data to take into account the non uniform variance of the results.}
      \label{ConvergenceTime_VS_n}
\end{figure}

As shown in \cref{EmptyNonEmptyBack}, only the agents occupying the corners of the convex-hull of the constellation may select an orientation with empty back half-planes, thus only they may jump, while the inner agents necessarily stay put until they become external. Since agents' headings change randomly, agents on the convex-hull of the system have a high probability to jump towards all other agents, hence the system tends to gather, as shown in the simulations.
\begin{figure}[tbhp]
  \centering
    \includegraphics[width=45mm]{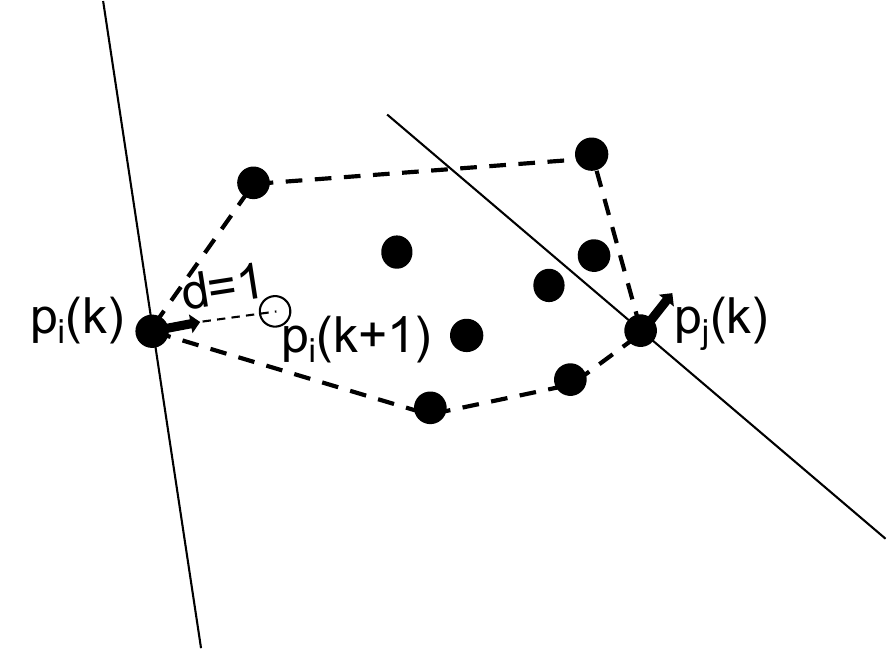}
    \caption{The headings of agents $i$ and $j$ are shown by arrows. The back half-plane of agent $i$ is currently empty of other agents, therefore it jumps a fixed step-size $d=1$ forward, while the back of agent $j$ contains some agents so it stays put, and all internal agents are guaranteed to stay put.}
      \label{EmptyNonEmptyBack}
\end{figure}

But this simplified account does not take into consideration the fact that even divergence may occur due to blind unit jumps, however with negligibly small probability. An adversarial argument clearly illustrates this. Indeed, consider a system of two agents, $n=2$, and assume that the agents' headings are (almost) perpendicular to the line they define and oriented in opposite directions (so that we have  $0<\hat{\theta}^\intercal_i(k)  [p_j(k)-p_i(k)] \ll1$ and $\hat{\theta}_j(k)= -\hat{\theta}_i(k)$). Since both their back half-planes are empty they both jump forward, and obviously we have that the distance between them increases as a result of the move.

We next suggest a modified gathering algorithm with continuous-time dynamics, for which we can prove gathering to a very small region in finite expected time.

\section{Continuous-time dynamics} 
In this system, once in a unit time-interval $\Delta t=1$, simultaneously, each agent changes its heading direction $\hat{\theta}_i(t)$ by choosing a uniformly distributed random angle between $0$ and $2\pi$. Here too, the sensor is aimed backwards (at $-\hat{\theta}_i(t)$) but it includes a ``blind-zone" half disc area of radius $\delta$, typically $\delta\ll1$ (see \cref{UL_FiniteVisibilityLyapunov}). During the time-interval $\Delta t$ an agent keeps its heading direction, and if its sensing area is empty it moves forward with a fixed velocity $v=1$, otherwise it stops. Note that we assume that the agent may move and stop during $\Delta t$ according to the changing constellation of the system. 

Denote by $d_{ij}(t)=\|p_i(t)-p_j(t)\|$ the distance between agents $i$ and $j$ at time $t$, so that if $d_{ij}<\delta$, the agents are too close to potentially see each other, otherwise we call them ``separated". The dynamic law in continuous-time is:
\begin{align}
    \begin{split}
        \dot{p}_i(t) &= \begin{bmatrix} \cos(\theta_i(t)) \\ \sin (\theta_i(t)) \end{bmatrix}  s_i(t) \quad and \quad \theta_i(t)=\sum_{k=1}^{\infty}\chi^{(i)}_k1_{\Delta_k}(t) \\
        where \quad &\chi^{(i)}_k \text{are iid uniformly distributed over } [0,2\pi] \\
        &1_{\Delta_k}(t)= 
            \begin{cases}
                1, &\quad  \text{for } t \in [k, k+1)\\
                0, &\quad otherwise
            \end{cases} \\
        &s_i(t)= 
            \begin{cases}
       	        0, &\quad \exists~j \;:\; d_{ij}>\delta \text{ and } \hat{\theta}^\intercal_i(t)[p_j(t)-p_i(t)] \leq 0\\
	            1, &\quad otherwise
            \end{cases}
    \end{split}
    \label{eq:continuous dynamics}
\end{align}

For the agents' dynamics, by abuse of notation, derivatives actually represent well-defined right derivatives. In the following proofs we often omit the time index $t$.

\begin{lemma}\label{dijNeverIncrease}
The distance between two ``separated" agents never increases.
\end{lemma}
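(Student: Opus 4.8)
The plan is to show that at every instant at which agents $i$ and $j$ are separated the time derivative of their distance is non-positive, and then to upgrade this pointwise bound to the global statement by continuity of $d_{ij}(t)$. First I would differentiate $d_{ij}=\|p_i-p_j\|$ to get $\dot d_{ij}=\hat r^\intercal(\dot p_i-\dot p_j)$, where $\hat r=(p_i-p_j)/d_{ij}$ is the unit vector pointing from $j$ to $i$ (the direction of increasing distance). Substituting the dynamics $\dot p_i=\hat\theta_i s_i$ and $\dot p_j=\hat\theta_j s_j$ yields $\dot d_{ij}=s_i\,\hat r^\intercal\hat\theta_i-s_j\,\hat r^\intercal\hat\theta_j$, so it suffices to show that each of the two terms is $\le 0$.

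The crux is that the sensing rule pins down the sign of each term. For the first term, if $s_i=1$ then by definition no separated agent lies in $i$'s back half-plane; since $i$ and $j$ are separated ($d_{ij}>\delta$), this forces $\hat\theta_i^\intercal(p_j-p_i)>0$, equivalently $\hat r^\intercal\hat\theta_i<0$. Hence $s_i\,\hat r^\intercal\hat\theta_i\le 0$, strictly negative when $i$ moves and zero when it is stopped. The symmetric argument applied to $j$ shows that $s_j=1$ implies $\hat\theta_j^\intercal(p_i-p_j)>0$, i.e. $\hat r^\intercal\hat\theta_j>0$, so $-s_j\,\hat r^\intercal\hat\theta_j\le 0$. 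Adding the two gives $\dot d_{ij}\le 0$ wherever the derivative exists. The intuition is simply that a separated agent moves only when its partner sits in its front half-plane, so its forward motion can only carry it toward that partner.

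To obtain the global conclusion I would observe that, within each unit interval, headings are fixed, so $p_i(t)$ and $p_j(t)$ are continuous and piecewise linear, with breakpoints only where a sensor bit $s_i$ or $s_j$ toggles or at the integer heading-reset times. Thus $d_{ij}(t)$ is continuous and piecewise-$C^1$, and a continuous function whose derivative is $\le 0$ on each smooth piece is non-increasing; this yields the lemma for as long as the two agents remain separated.

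I expect the only delicate point to be the bookkeeping at the non-smooth instants — the finitely many times within an interval where a sensor switches — together with the role of the blind-zone radius $\delta$. The blind zone is exactly what makes the ``separated'' hypothesis indispensable: an agent nearer than $\delta$ is invisible and so cannot stop its partner, which would break the sign argument for the moving term. I would therefore make sure the proof invokes $d_{ij}>\delta$ precisely at the step placing $j$ outside $i$'s blind zone, guaranteeing that $j$ genuinely enters the sensing condition that sets $s_i$.
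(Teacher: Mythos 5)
Your proposal is correct and takes essentially the same route as the paper: you differentiate $d_{ij}$, substitute the dynamics, and use the sensing rule (with the separation hypothesis $d_{ij}>\delta$ placing $j$ outside $i$'s blind zone) to force the sign of each velocity-projection term, which is exactly the paper's computation $\frac{d}{dt}d_{ij}=-(s_i\cos\theta_{ij}+s_j\cos\theta_{ji})\le 0$ written with inner products instead of cosines. Your extra bookkeeping about continuity and the piecewise-$C^1$ structure at sensor-switching instants is a minor refinement that the paper leaves implicit.
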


\begin{proof}
Suppose agents $i$ and $j$ are ``separated" at time $t$ so that $d_{ij}>\delta$. Denote by $\theta_{ij}$ the (current) small angle between vector $p_j-p_i$ and the heading direction of agent $i$ as shown in \cref{UL_FiniteVisibilityLyapunov}.

\begin{figure}[tbhp]
  \centering
 \includegraphics[width=50mm]{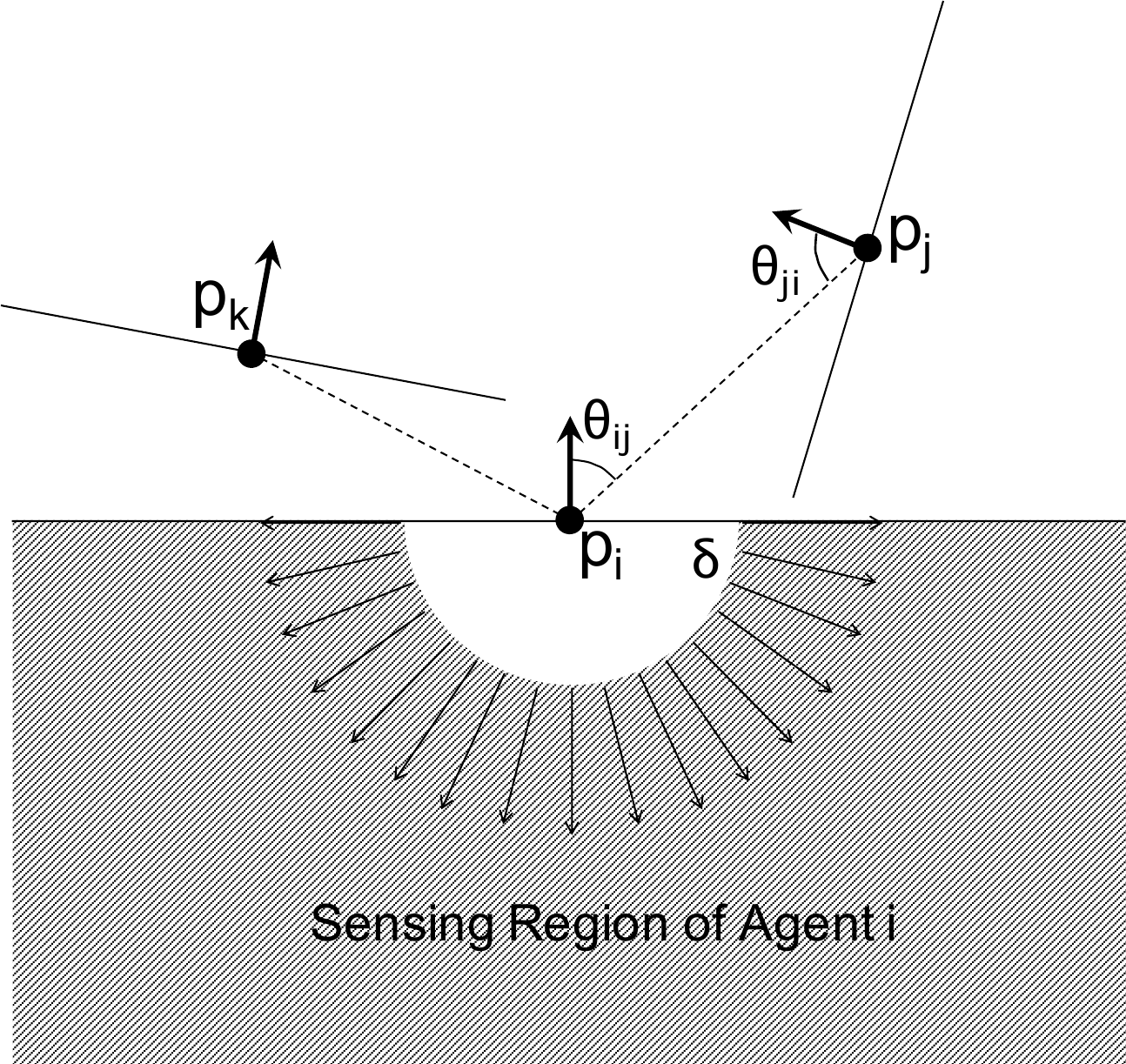}
    \caption{The dashed half-plane region missing the half-disc of radius $\delta$ centered at $p_i$ is the sensing coverage area of agent $i$ with its dead-zone of radius $\delta$. Here the sensing region of agent $i$ is empty, therefore agent $i$ moves. Agent $j$ can also move, but agent $k$ cannot move as its sensor coverage area contains agent $i$.}
      \label{UL_FiniteVisibilityLyapunov}
\end{figure}

The derivative of the distance between $p_i$ and $p_j$ (that is the inverse of their approach speed) is given by
\begin{align}
    \label{AgentSpeed}
    \begin{split}
        \frac{d}{dt}d_{ij}&=\frac{d}{dt}\|p_j-p_i\|=-\left(\dot p_i \cdot \frac{p_j-p_i}{\|p_j-p_i\|}+ \dot p_j \cdot \frac{p_i-p_j}{\|p_i-p_j\|}\right)\\
        &=-(\|\dot{p}_i\|\cos\theta_{ij}+\|\dot{p}_j\|\cos\theta_{ji})=-(s_i\cos\theta_{ij}+s_j\cos\theta_{ji}).
    \end{split}
\end{align}

By the dynamic law, if the sensor coverage area of agent $i$ is not empty (i.e. $s_i=0$) it does not move. In this case its speed is $\|\dot{p}_i\|=s_i=0$. Otherwise, all other agents including agent $j$ are either in front of it or in its ``blind-zone'' and then it moves forward at $\|\dot{p}_i\|=s_i=1$. However, since we assumed agents $i$ and $j$ to be ``separated", agent $j$ is then in this case necessarily in front of agent $i$, thus  $-\frac{\pi}{2} < \theta_{ij} < \frac{\pi}{2}$, i.e. $0<\cos\theta_{ij}\leq 1$.  Similar arguments hold for agent $p_j$, that is either $\|\dot{p}_j\|=0$ or $\|\dot{p}_j\|=v=1$ and  $0<\cos \theta_{ji}\leq 1$. Hence we have that
\begin{equation}\label{NeverLoseFriend}
d_{ij}> \delta \implies \frac{d}{dt}d_{ij} \leq 0 .
\end{equation}

\end{proof}

\begin{corollary}\label{NeverLoseFriendCorollary}
Agents within range $\delta$ at time $t$ remain within range $\delta$ at all $t'\geq t$.
\end {corollary}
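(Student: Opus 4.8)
The plan is to read this corollary as a forward-invariance (barrier) statement that follows almost immediately from Lemma~\ref{dijNeverIncrease} together with the continuity of the agents' trajectories. Lemma~\ref{dijNeverIncrease} guarantees that $d_{ij}(t)$ can only fail to decrease while the two agents are \emph{not} separated, i.e. while $d_{ij}\le\delta$; the content of the corollary is precisely that the sublevel set $\{d_{ij}\le\delta\}$ is invariant under the forward flow. I would establish this by a contradiction argument organized around the instant at which $d_{ij}$ would have to cross the threshold $\delta$ from below.

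First I would observe that each position $p_i(t)$ is continuous in $t$: from the dynamic law~\eqref{eq:kwjhsfkjshfkjsdhkfjhsdkfjhskdjfh} we have $\|\dot p_i(t)\|=s_i(t)\le 1$, so $p_i$ is Lipschitz with constant $1$, even though the heading $\theta_i$ resets at integer times and the binary signal $s_i$ may switch within an interval. Consequently $d_{ij}(t)=\|p_i(t)-p_j(t)\|$ is continuous (indeed Lipschitz), hence differentiable for almost every $t$. Next I assume, for contradiction, that $d_{ij}(t')\le\delta$ yet $d_{ij}(t'')>\delta$ for some $t''>t'$. Setting $t_0=\sup\{t\in[t',t''] : d_{ij}(t)\le\delta\}$, continuity gives $d_{ij}(t_0)=\delta$ while $d_{ij}(t)>\delta$ for every $t\in(t_0,t'']$. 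On this open interval the agents are separated, so Lemma~\ref{dijNeverIncrease} yields $\frac{d}{dt}d_{ij}(t)\le 0$ wherever the derivative exists. Integrating the absolutely continuous function $d_{ij}$ over $[t_0,t'']$ then gives $d_{ij}(t'')\le d_{ij}(t_0)=\delta$, contradicting $d_{ij}(t'')>\delta$.

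The only delicate point, and hence what I expect to be the real (if modest) work, is the regularity issue hidden in the piecewise-continuous dynamics: I must justify that the non-increase inequality of Lemma~\ref{dijNeverIncrease} may legitimately be integrated across the finitely many instants on $(t_0,t'']$ at which $\frac{d}{dt}d_{ij}$ is undefined, namely the heading resets at integer times and the sensor-switching events where some $s_i$ flips. Because $d_{ij}$ is Lipschitz, it is absolutely continuous and equals the integral of its a.e.-defined derivative, so these measure-zero exceptional times do not affect the integral and the barrier argument goes through unchanged. I would close by noting that the threshold case $d_{ij}(t')=\delta$ is already covered, since the argument uses only $d_{ij}(t')\le\delta$, and that the statement holds simultaneously for every pair of agents.
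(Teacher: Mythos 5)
Your proposal is correct and takes essentially the same approach as the paper: both combine continuity of the trajectories with the non-increase property of Lemma~\ref{dijNeverIncrease} in a barrier argument showing the threshold $\delta$ cannot be crossed from below. The paper states this informally (the distance ``must first reach $d_{ij}=\delta$'' and cannot increase from there), whereas your supremum-plus-absolute-continuity formulation is a more careful rendering of the same idea that also sidesteps the paper's minor imprecision of invoking the lemma at $d_{ij}=\delta$ rather than on the region $d_{ij}>\delta$ where it is actually stated.
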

\begin{proof}
This is a direct consequence of \cref{dijNeverIncrease,NeverLoseFriend}. Note that if an agent $j$ is closer than $\delta$ to agent $i$ (so that $d_{ij}<\delta$), in order for $d_{ij}$ to increase above $\delta$, it first has to reach the value $d_{ij}=\delta$ since when an agent moves, it moves in a continuous motion, but by \cref{dijNeverIncrease} the value $d_{ij}=\delta$ cannot increase. A detailed proof for this corollary is given in \cref{App: Proof Bourbaki Never Lose Friend}.
\end{proof}

Next we show that if not all agents are confined inside a circle of radius $\delta$ there is a strictly positive probability bounded away from zero for the distance between pairs of agents to decrease at a positive and bounded away from zero rate, until all agents are confined in a circle of radius $\delta$.

\subsection{Proof of Convergence}

\begin{theorem}\label{Theorem} Continuous dynamics with agents acting according to the motion law given in \cref{eq:continuous dynamics}, converges to a region of radius $\delta$ in finite expected time.
\end{theorem}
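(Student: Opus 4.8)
The plan is to exhibit a nonnegative potential that is monotone under the dynamics and that drifts down by a definite amount with probability bounded away from zero, until the swarm is confined to a disc of radius $\delta$. I would take as potential the total excess $\Phi(k)=\sum_{i<j}\max\{0,\,d_{ij}(k)-\delta\}$, together with the auxiliary count $N(k)=\#\{(i,j):d_{ij}(k)>\delta\}$. Lemma~\ref{dijNeverIncrease} shows each separated distance is non-increasing, and Corollary~\ref{NeverLoseFriendCorollary} shows a pair once within $\delta$ stays within $\delta$ forever; hence both $\Phi$ and $N$ are non-increasing, and $\Phi(k)=0$ exactly when every pairwise distance is at most $\delta$, i.e. when the whole constellation sits inside a region of radius $\delta$. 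It therefore suffices to bound the expected time until $\Phi$ reaches $0$.

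The core is a one-step contraction estimate for the current diameter pair. Fix a step with $\Phi(k)>0$ and let $(a,b)$ realise the diameter $D=d_{ab}>\delta$; note $D\le D_0$, the initial diameter, by monotonicity. Every agent lies in the lens $B(a,D)\cap B(b,D)$, and placing $a$ at the origin and $b=(D,0)$ a short computation shows every other agent has a strictly positive first coordinate, while any agent that is separated from $a$ (distance $>\delta$, hence outside the blind zone) satisfies $|y|/x<2D/\delta$, i.e. it sits within angle $\Psi_0=\arctan(2D_0/\delta)<\tfrac{\pi}{2}$ of the ray $a\to b$. Consequently, if $a$ picks any heading within the cone of half-angle $\gamma_0=\tfrac12(\tfrac{\pi}{2}-\Psi_0)>0$ about $a\to b$, then every agent that could block it is strictly in its front half-plane, so $s_a=1$ and $a$ advances toward $b$; symmetrically for $b$. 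To turn this instantaneous fact into a decrease over the interval, I would use that Corollary~\ref{NeverLoseFriendCorollary} keeps every blind-zone neighbour of $a$ permanently within $\delta$ (so it never blocks $a$), while each initially separated agent moves relative to $a$ at speed $\le 2$ at distance $\ge\delta$, hence its bearing drifts at rate $\le 2/\delta$. With the starting margin $\gamma_0$, no separated agent reaches the back half-plane before $\tau_0=\gamma_0\delta/2$, so throughout $[k,k+\tau_0)$ we keep $s_a=1$ and, by (\ref{AgentSpeed}), $\tfrac{d}{dt}d_{ab}\le -c_0$ with $c_0=\cos(2\gamma_0)>0$. Thus on this event, which has probability at least $q=(\gamma_0/\pi)^2>0$, the pair either shrinks by at least $\eta:=c_0\tau_0$ or enters the blind zone and merges; all of $\gamma_0,\tau_0,c_0,\eta,q$ depend only on $D_0$ and $\delta$, and $q$ is independent of $n$.

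I would close with a book-keeping/optional-stopping argument. Let $T=\inf\{k:\Phi(k)=0\}$. While $\Phi(k)>0$ the contraction event occurs with probability $\ge q$ and, when it does, the diameter pair either shrinks by $\ge\eta$ (if $D>\delta+\eta$) or merges permanently (if $\delta<D\le\delta+\eta$). Shrink-events can occur at most $\Phi(0)/\eta$ times, since each lowers the non-increasing nonnegative $\Phi$ by at least $\eta$, and merge-events at most $\binom{n}{2}$ times, since each permanently removes a pair from $N$. Hence at most $S=\Phi(0)/\eta+\binom{n}{2}$ successful steps can occur before $\Phi=0$, each step succeeds with probability $\ge q$ given the past, and a standard Wald / negative-binomial bound yields $\E[T]\le S/q<\infty$.

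I expect the main obstacle to be precisely the persistence step: guaranteeing that the admissible (unblocked) heading cone, the approach rate, and the effective duration are all bounded below by constants that do not depend on the possibly adversarial constellation, even as that constellation evolves within the interval. This uniformity is exactly what the blind zone $\delta$ buys: through Corollary~\ref{NeverLoseFriendCorollary} it removes nearby agents from the sensor, and together with the a priori bound $D\le D_0$ it confines the obstructing bearings to $[-\Psi_0,\Psi_0]$ with $\Psi_0<\tfrac{\pi}{2}$. Without the blind zone these constants would degenerate, as the $n=2$ divergent example for the discrete algorithm already signals, which is why only the continuous version admits this proof.
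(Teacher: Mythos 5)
Your proposal is correct, but it takes a genuinely different route from the paper's proof. The paper works with the agent $s$ at the \emph{sharpest corner of the convex hull}: the corner-angle bound $\alpha_s\le\pi(1-\tfrac{2}{n})$ gives a free heading cone of width $\ge\tfrac{\pi}{n}$, hence per-interval success probability $\ge\tfrac{1}{2n}$; a convex-hull--dilation argument bounds the unimpeded travel by $\delta\tan\tfrac{\pi}{4n}$; and, because progress is measured through $l_{sj}$ for a specific partner $j$, the paper needs the extra independent event that $j$ stays put (probability $\ge\tfrac14$), yielding joint probability $\ge\tfrac{1}{8n}$, per-success decrease $\ge\delta\bigl(1-\sqrt{1-\tan^2\tfrac{\pi}{4n}}\bigr)$ of the Lyapunov function $\sum_{i,j}l_{ij}$, and the bound (\ref{ExpectedTime}). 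You instead work with the \emph{diameter pair} $(a,b)$: the lens containment $B(a,D)\cap B(b,D)$ combined with the blind zone confines all potential blockers of $a$ to a cone of half-angle $\Psi_0=\arctan(2D_0/\delta)<\tfrac{\pi}{2}$, so your heading cone $2\gamma_0$ is independent of $n$ (but decays like $\delta/D_0$); your persistence step (bearing-drift rate $\le 2/\delta$, hence an unobstructed sub-interval $\tau_0=\gamma_0\delta/2$) replaces the paper's dilation argument and is arguably tighter reasoning; and you dispense with the ``partner stays put'' event altogether, since when $b$ moves it can only help ($s_b\cos\theta_{ba}\ge 0$, by the same computation as Lemma \ref{dijNeverIncrease}) --- a genuine simplification that removes a factor and a conditioning step. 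Your bookkeeping (at most $\Phi(0)/\eta$ shrink events plus $\binom{n}{2}$ permanent merge events, then Wald) also differs from the paper's single division $\mathcal{L}(P(0))/Shrink_{min}$. The trade-off: the paper's per-step constants degrade polynomially in $n$ but not in the initial spread, giving roughly $O\bigl(n^5 d_{max}(0)/\delta\bigr)$; yours degrade in $D_0/\delta$ but not in $n$ (your $q\sim(\delta/D_0)^2$ and $\eta\sim\delta^2/D_0$), giving roughly $O\bigl(n^2 D_0^4/\delta^4\bigr)$ --- incomparable bounds, each better in a different parameter regime. Minor quibbles, none fatal: only $a$'s heading is actually needed, so $q=\gamma_0/\pi$ suffices; ``every other agent has a strictly positive first coordinate'' fails for an agent coincident with $a$, but such agents are blind-zone agents and never block; and the final bound should read $\E[T]\le(\lceil S\rceil+1)/q$.
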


\begin{proof}
We know by \cref{NeverLoseFriendCorollary} that pairs of agents within range $\delta$ from each other at some time $t$ will remain so forever. We shall next show that while in the agents' configuration there exists pairs of agents at distance bigger than $\delta$ from each other, there is a significant (bounded away from zero by a constant) probability that there will be a significant decrease in the distance between them.

\subsubsection{Preliminaries}\ 

\begin{lemma}\label{lem:UL_SharpestAngle}
As long as not all agents are confined in a circle of radius $\delta$, there is an agent with a strictly positive probability bounded away from zero by a constant to move a distance bounded away from zero by another constant during $\Delta t$.
\end{lemma}

\begin{proof}
The sum of corner angles of any convex polygon of $m$ corners is given by $\pi(m-2)$, therefore the sharpest corner of a convex polygon of $m$ corners is bounded from above by $ \frac{\pi(m-2)}{m}=\pi(1-\frac{2}{m})$. Since the maximal number of corners of the convex-hull of a system of $n$ agents is $n$, we have that $\alpha_s$, the sharpest corner of the convex-hull of a system on $n$ agents, is bounded by
\begin{equation}\label{UL_SharpestAngleValue}
\alpha_s\leq \pi(1-\frac{2}{n}).
\end{equation}

Let $\alpha_i=\angle p_{i-1}p_ip_{i+1}$ be the convex-hull angle at a corner $p_i$, and let $p_{i-1}$ and $p_{i+1}$ be the locations of the corners of the convex-hull adjacent to $p_i$ (see \cref{fig:UL_SharpestAngle}). 

\begin{figure}[tbhp]
  \centering
   \includegraphics[width=0.6\textwidth]{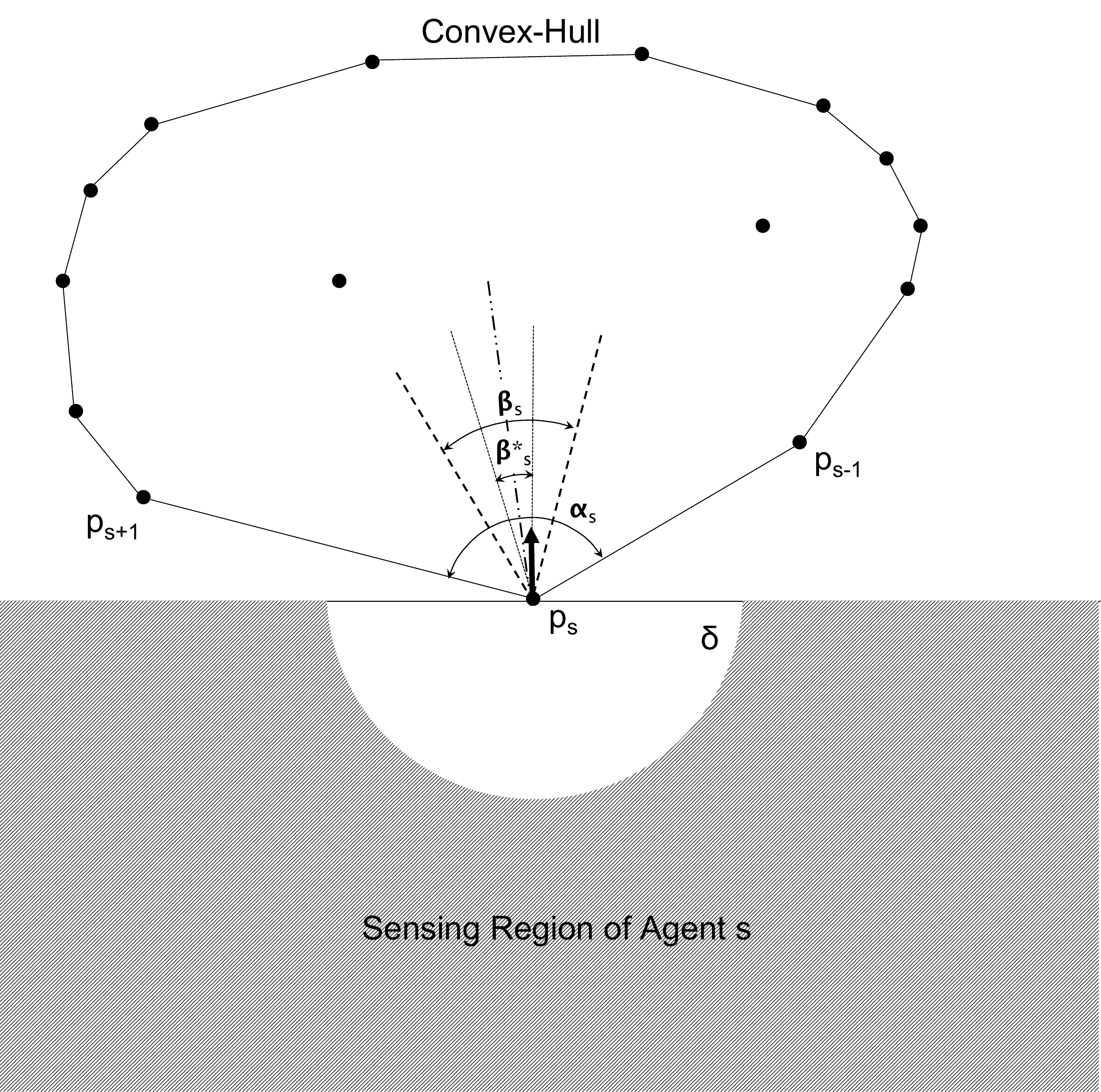}
    \caption{Agent $s$ at the sharpest corner of the convex-hull is shown with its sensing area. Agents $p_{s-1}$ and $p_{s+1}$ are the adjacent convex-hull corners to $p_s$. The sides of $\beta_s$ are perpendicular to those of $\alpha_s$, therefore $\beta_s=\pi-\alpha_s$. Angle $\beta^*_s=\frac{1}{2}\beta_s$ share the same bisector with $\alpha_s$ and $\beta_s$. The black arrow shows the selected heading direction of agent $s$.}
      \label{fig:UL_SharpestAngle}
\end{figure}

Denote by $\beta_i=\pi-\alpha_i$ the smaller angle between the two lines perpendicular to the sides of $\alpha_i$ so that if the random heading of agent $i$ falls inside $\beta_i$ it can move. Note that if the heading of $p_i$ is precisely along one of the sides of $\beta_i$ (and not explicitly inside $\beta_i$), agent $i$ may not move (as then $p_{i-1}$ or $p_{i+1}$ may be located in its sensing area). Therefore consider the symmetric central half of $\beta_i$ about the bisector of $\beta_i$ (and $\alpha_i$ too) denoted by $\beta^*_i= \frac{1}{2}\beta_i=\frac{1}{2}(\pi-\alpha_i)$. If the heading of agent $i$ on the convex hull falls inside $\beta_i^*$, then agent $i$ is guaranteed to move at the beginning of the time-interval. The probability for this event to happen is $\frac{1}{2\pi}\beta^*_i=\frac{1}{2\pi}\frac{1}{2}\beta_i = \frac{1}{4\pi}(\pi-\alpha_i)$. Hence the probability for agent $i$ (defining the convex-hull) to move is lower-bounded as follows: 
$$Pr(\textrm{agent } i \textrm{ moves})\ge\frac{1}{4\pi}(\pi-\alpha_i).$$

To further bound this probability independently of the constellation, let us consider the agent at the sharpest corner of the convex-hull, so that by \cref{UL_SharpestAngleValue} we know that $\alpha_s\leq \pi(1-\frac{2}{n})$. Hence the probability of $s$, the agent at the sharpest corner of the convex-hull $p_s$ to move is lower-bounded by
\begin{equation}\label{UL_SharpestAngleToMove}
Pr(\textrm{agent } s \textrm{ moves})\ge\frac{1}{4\pi}(\pi-\pi(1-\frac{2}{n}))=\frac{1}{2n}\ .
\end{equation}
Let us define the case where the heading of agent $s$ is inside its associated $\beta^*_s$ (at the beginning of a time-interval) as a ``successful time-interval", and its associated bound \cref{UL_SharpestAngleToMove} as ``the minimal probability for a successful time-interval" at the beginning of $[k, k+1)$. We shall next prove that for a ``successful time-interval'' agent $s$ moves a distance bounded away from zero by a constant.

To find a bound on the minimal distance that agent $s$ will surely travel if a successful heading was selected (with probability greater than $\frac{1}{2n}$), we must compute the smallest distance that agent $s$ can move ahead unimpeded by any other agent entering its sensing area.

Assume the heading of an agent $s$ at the sharpest corner of the convex-hull is inside its associated $\beta^*_s$ as shown in \cref{fig:UL_SharpestAngle}. Agent $s+1$ (adjacent to $s$ on the convex-hull) and agent $s$ define a side of the convex-hull, therefore $p_{s+1}$ is located somewhere along this side of $\alpha_s$.

By assumption, the agents of the system reside in a convex region contained in the front half-plane of the moving agent $s$. The geometry of the convex region may change in time as other agents can possibly move. However, due to the bounded speed of all agents, the region where all agents reside during some time $d t\le\Delta t$ after the motion of agent $s$ started will be contained within a dilation of the original convex-hull by a disc of radius $d t$.

Thus agent $s$ will keep advancing at least until its forward moving sensing region intersects the dilation of the convex-hull of the agents with the same speed (or does not sense any other agents during this time-interval). Since agents $p_{s-1}$, $p_s$, and $p_{s+1}$ are successive agents on the convex hull, the convex hull at the beginning of the time-step resides completely within the $\alpha_s$ angle: the smaller region limited by the half-lines $[p_s p_{s+1})$ and $[p_s p_{s-1})$. Therefore the dilation of the convex hull is included within the dilation of these half-lines. To lower bound the step made by agent $s$, we study when the dilation of these half-lines first intersect the translation of the sensing region of agent $s$.

The dilation of the half-lines consists in the union of three simple geometric objects: the translation of the two half-lines $[p_s p_{s\pm1})$ in the outward direction orthogonal to them, and a circular arc with fixed center $s$ and angle $\pi - \alpha_s$ and with growing radius.

Assume the translation of the sensing region of agent $s$ first intersects the dilation of the half-lines on one of their translations. In this case, from simple considerations, the first intersection happens on the translation of the half-line $[p_s p_{s\pm1})$ with smallest angle with respect to the original half-plane of sensing of agent $s$. Without loss of generality assume this happens on the $[p_sp_{s+1})$ side.

\begin{figure}[tbhp]
  \centering
    \includegraphics[width=0.6\textwidth]{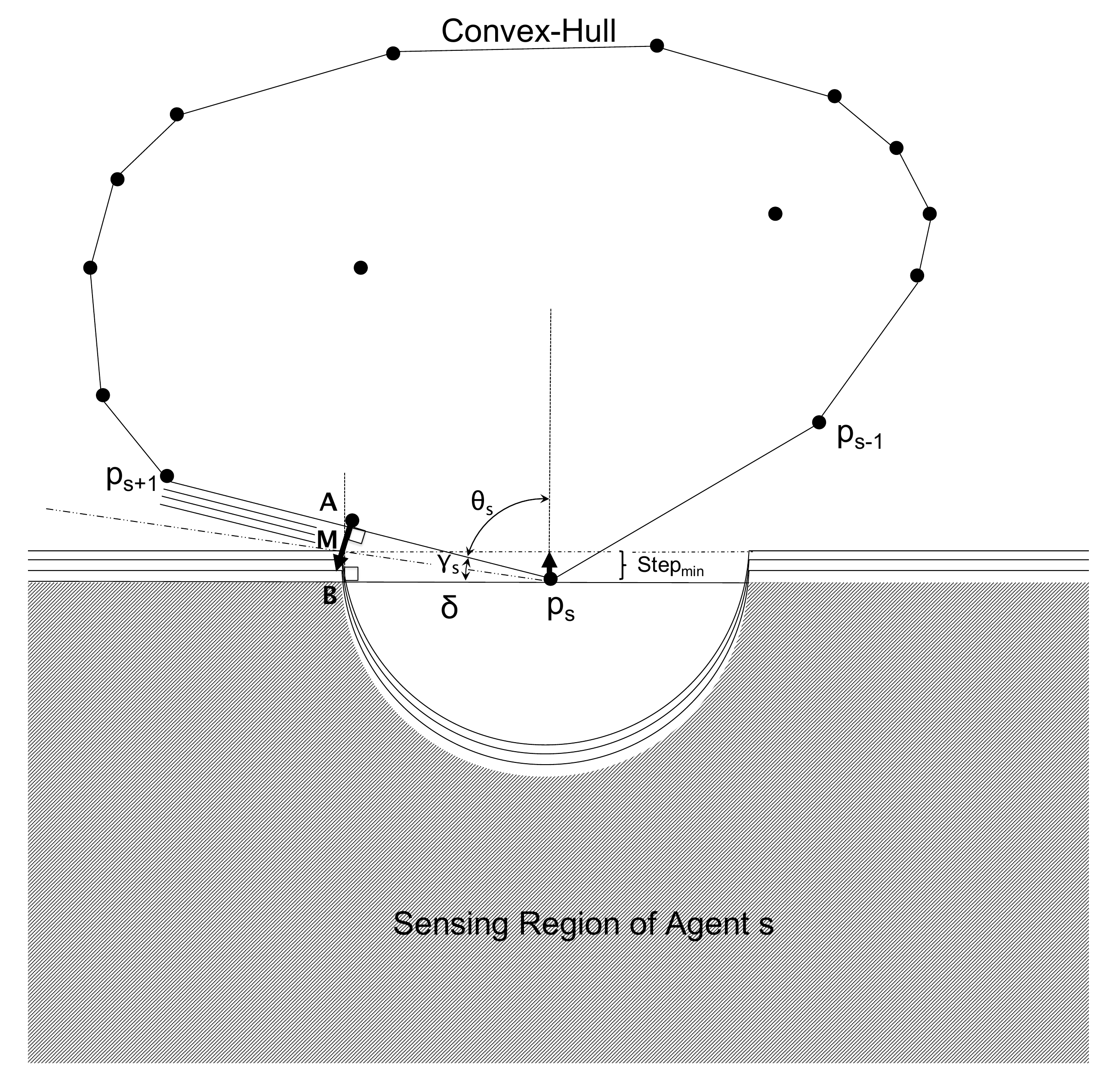}
    \caption{The line through $p_s$ and $M$ is the bisector of $\gamma_s=\frac{\pi}{2}-\theta_s$, and $\|MB\|=\|AM\|$ is the minimal displacement of agent $s$ before being possibly stopped by the constellation (or running out of time).}
      \label{UL_StepMin1}
\end{figure}

We then clearly see that the intersection between the sensing region of agent $s$ and the translating half-line will happen when this half-line intersects the corner point at distance $\delta$ of agent $s$ on the $[p_sp_{s+1})$ side. This happens at point $M$ which is located a distance of $\|BM\|=\delta\tan{\frac{\gamma_s}{2}}$, where $\gamma_s$ is $\frac{\pi}{2}-\theta_s$ (see \cref{UL_StepMin1}). Since the heading of agent $s$ is assumed to be within the $\beta_s^*$ region, we have
$$ 
 \theta_s\le\frac{1}{2}(\alpha_s+\beta^*_s)
 =\frac{1}{2}(\alpha_s+\frac{1}{2}(\pi-\alpha_s))=\frac{1}{4}(\alpha_s+\pi).
 $$
We have by \cref{UL_SharpestAngleValue} that $\alpha_s \leq \pi(1-\frac{2}{n})$, therefore
\begin{equation}\label{Theta_s}
\theta_s \leq \frac{1}{4}(\pi(1-\frac{2}{n})+\pi)=\frac{\pi}{2}(1-\frac{1}{n}).
\end{equation}
And since $\gamma_s\overset{\Delta}{=}\frac{\pi}{2}-\theta_s$, we have that
$
\gamma_s \geq \frac{\pi}{2}-\frac{\pi}{2}(1-\frac{1}{n})=\frac{\pi}{2n}
$,
and therefore (as $0<\frac{\gamma_s}{2}<\frac{\pi}{2}$) we have that
 $$
 \|BM\|\ge \delta \tan{\frac{\pi}{4n}}\ .
 $$
 This shows that agent $s$, in this case, will move by at least a distance of
$$ Step_s \ge \delta \tan{\frac{\pi}{4n}} $$
before possibly being stopped by the motion law we defined. 

Assume now that the translation of the sensing region of agent $s$ first intersects the dilation of the half-lines $[p_sp_{s\pm 1})$ on the dilating open circular arc around $s$.


Since the heading of agent $s$ is inside the $\beta_s^*$ angle, then agent $s$ moves ``forward'', in the sense that if the velocity vector defines an axis, then the projection of the displacement of agent $s$ onto that axis is positive. Assume that the first intersection between the constant angle growing radius circular arc and the translated half-circle border of agent $s$'s sensing region is on a corner point of the border of the sensing region of $s$ at distance $\delta$ from it. As the normal to the circular arc of the sensing region points towards the current position of agent $s$ and the normal to the growing radius circular arc points towards the initial position of agent $s$, we then have that the projection of the vector $p_s(t+dt)-p_s(t)$ is negative onto the axis of displacement of $p_s$, which is a contradiction. Furthermore, we also have that any intersection with the half-line borders of the sensing region of agent $s$ is clearly not the first possible one. Therefore the first intersection of the growing radius circular arc and of the translating border of sensing of agent $s$ necessarily happens in the interior of the half-circle border of the sensing region of agent $s$.


\begin{figure}[tbhp]
  \centering
    \includegraphics[width=0.55\textwidth]{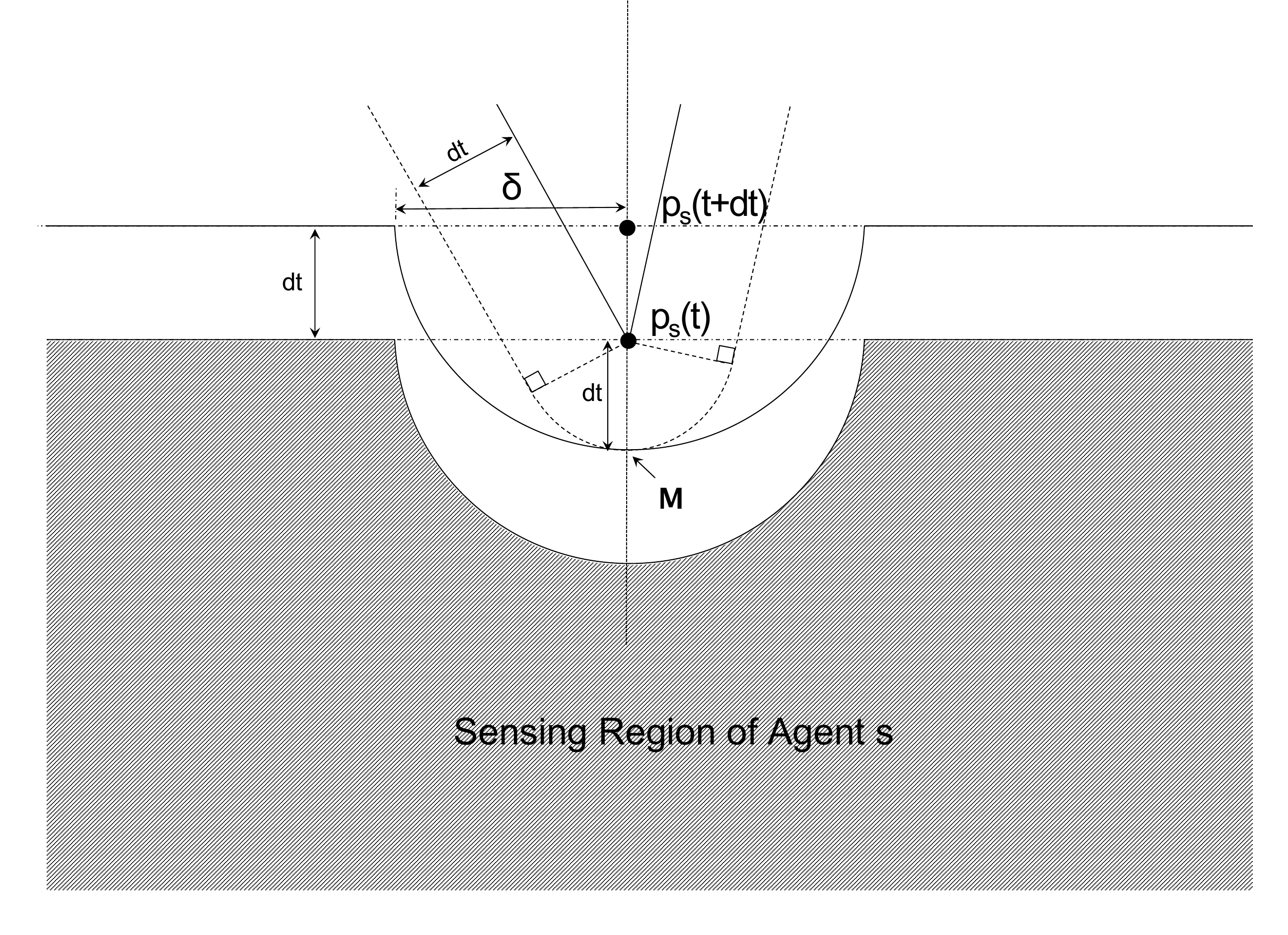}
    \caption{The circular arcs are tangent implying that their centers and the intersection point $M$ are aligned. As ${\|p_s(t)p_s(t+dt)\| = \|p_s(t)M\| = dt}$ and ${\|p_s(t+dt)M\| = \delta}$, necessarily ${dt = \frac{\delta}{2}}$ is the minimum displacement of agent $s$ before being possibly stopped by the constellation (or running out of time).}
      \label{UL_StepMinSharp}
\end{figure}

We thus have that, in this case, the first intersection would happen in the interior of both circular arcs. This further implies, since we are considering the first intersection, it happens when both circles are tangent, as shown in \cref{UL_StepMinSharp}. Therefore the center of each circle and the intersection point, denoted by $M$, are aligned. The center of the dilating circle, of radius $dt$, is the original position $p_s(t)$ of agent $s$. The center of the translating half-circle border, of radius $\delta$, of the sensing region is the new position $p_s(t+dt)$ of agent $s$ at intersection time, separated by $dt$ from the original position. Therefore the time of first intersection is $t+dt$ with $dt = \frac{\delta}{2}$.

This shows that agent $s$, in this second case, will move by at least a distance of
$$ Step_s \ge \frac{\delta}{2} $$
before possibly being stopped by the motion law we defined. However
$$ \frac{\delta}{2} \ge \delta\tan{\frac{\pi}{4n}} \iff n\ge \frac{\pi}{4\arctan{\frac{1}{2}}} \approx 1.7$$
which is always true since we have $n\ge 2$ agents.

We therefore conclude that for all cases, if the heading of agent $s$ falls inside $\beta_s^*$, then it moves by at least
$$ Step_s \ge \delta \tan{\frac{\pi}{4n}} $$
before possibly being stopped by the motion law we defined.




The minimal displacement of the agent defining the sharpest corner of the convex-hull during one time-interval is bounded by the smallest value between $Step_s$ above and the physical limit due to the travel speed $v=1$, i.e. $v\Delta t = 1$, hence we have that the bound on the step of the agent at the sharpest corner of the convex-hull is 
\begin{equation}\label{GeometricStep}
Step_{min} = \min\{\delta \tan{\frac{\pi}{4n}};1\}.
\end{equation}

\end{proof}

\subsubsection{The Lyapunov Function}\

A function is called Lyapunov if it maps the state of the system to a non negative value in such a way that the system dynamics causes a monotonic decrease of this value. If the Lyapunov function reaches zero only at desirable states of the system and we prove that the dynamics leads the Lyapunov function to zero, we can argue that the system converges to a desirable state.

For the proof of system convergence, let us define variables $l_{ij}(t)$ as follows:
\begin{equation}\label{Lyapunov_lij}
l_{ij}(t)= \left\{\begin{alignedat}{2}
    & 0, &&\quad 0\leq d_{ij}(t) \leq \delta\\
    & d_{ij}(t), &&\quad d_{ij}(t)\geq \delta \\
  \end{alignedat}\right.
\end{equation}
and a global variable $c(t)$:
\begin{equation}\label{Lyapunov_c}
c(t)= \left\{\begin{alignedat}{2}
    & 0, &&\quad \exists p_c(t)\in \mathbb{R}^2 \:\: \forall i \;:\; \|p_i(t)-p_c(t)\| < \delta\\
    & 1, &&\quad otherwise \\
  \end{alignedat}\right.
\end{equation}
so that if $c(t)=0$ we have that the system is confined in a disc of radius $\delta$ in the plane.

Let us define the following Lyapunov function:
\begin{equation}\label{lyapunov}
\mathcal{L}(P(t))=c(t)\sum_{i=1}^{n}\sum_{j=1}^{n} l_{ij}(t) .
\end{equation}


By \cref{dijNeverIncrease,NeverLoseFriendCorollary} we have that $l_{ij}$ can never increase, hence $\mathcal{L}(P(t))$ never increases. We shall next prove that with a probability which is finite and bounded away from zero by a constant, $\mathcal{L}(P(t))$ decreases by a positive and bounded away from zero quantity, until it reaches the value zero (within a finite expected time), which will then give us that $\mathcal{L}(P(t))$ will go to zero in finite expected time. 

\begin{lemma}\label{ShrinkMinLemma}
If at time $t$, the beginning of a time-interval, there is an agent $j$ distant more than $\delta$ from the agent $s$, currently located at the sharpest corner of the convex-hull, the probability that $l_{sj}(t)-l_{sj}(t+1)$ is at least a constant bounded away from zero, is higher than $\frac{1}{8n}$. 
\end{lemma}

Note that by \cref{ShrinkMinLemma}, we have that in finite expected time all agents will necessarily be confined inside a disc of radius $\delta$.

\begin{proof}
In order to evaluate the influence of agent $s$'s motion on the Lyapunov function defined by \cref{lyapunov}, we need to see how a step bigger than or equal to $Step_{min}=\min\{\delta \tan{\frac{\pi}{4n}},1\}$ can influence $l_{ij}(t)$ in the sum defining $\mathcal{L}(P(t))$.

Since agents $s$ and $j$ are assumed to be separated at time $t$, then gathering has not yet been achieved at the beginning of the time-interval. 

Suppose agent $j$ is located somewhere in the region shown in \cref{UL_Shrink1} as region $2$. We can easily lower bound the probability that it will remain stationary during the entire initial motion of agent $s$ as follows. 

\begin{figure}[tbhp]
  \centering
    \includegraphics[width=70mm]{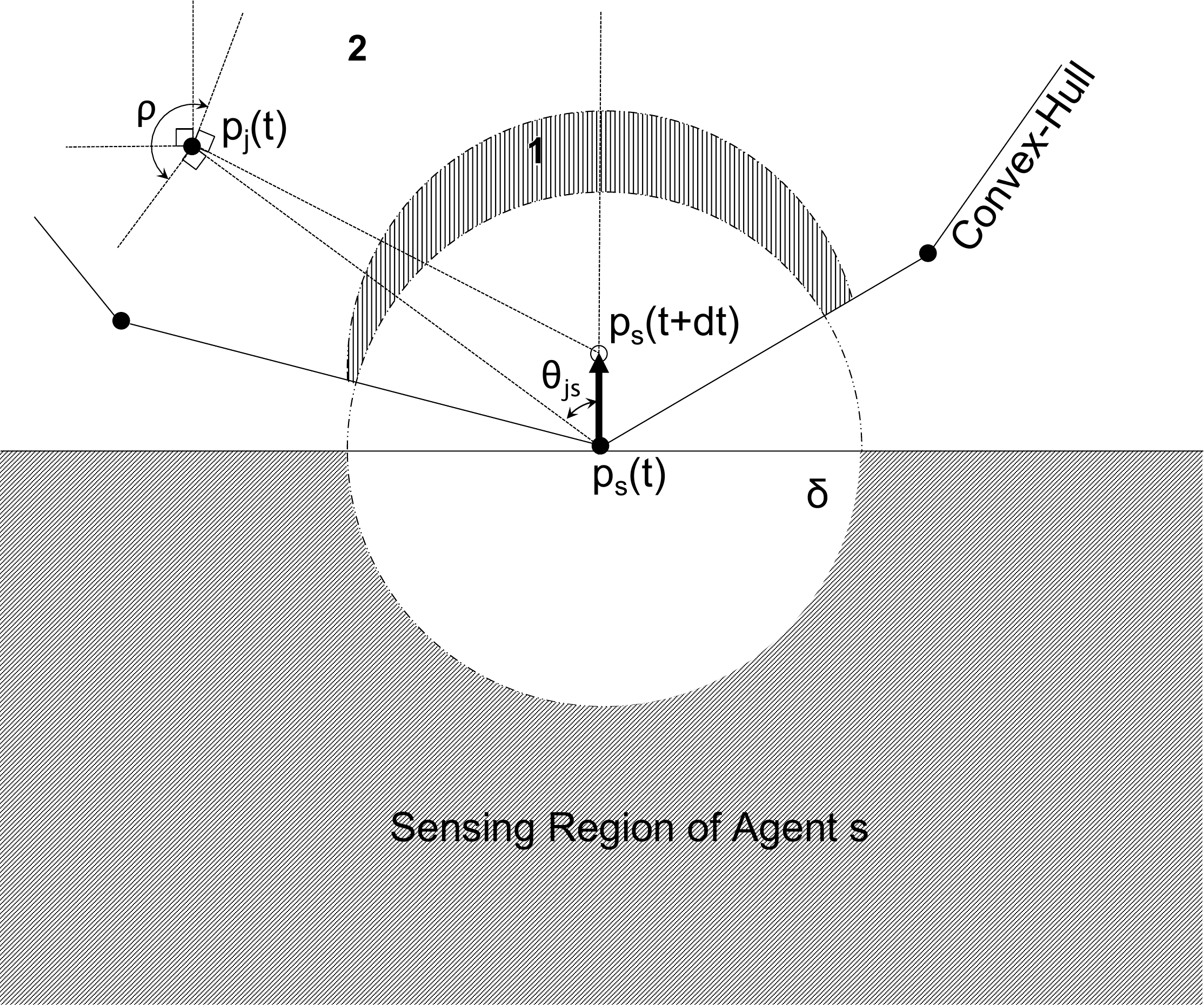}
        \caption{Region $1$ is the area in the convex-hull of the agents' locations at time $t$, swept by moving the $\delta$ radius disc in the heading direction of $s$, from $p_s(t)$ to $p_s(t+dt)$, where $dt\le 1$ defines the time for which agent $s$ moves at the beginning of the time-interval without being stopped, and then excluding the area of the $\delta$ radius disc at $p_s(t)$. Note that $dt\le 1$ as $s$ might be allowed to move again depending on the rest of the constellation after being stopped at time $t+dt$. Region $2$ is the area of the convex-hull which is not included in the disc of radius $\delta$ centered at $p_s(t)$ and also not in region $1$. If agent $j$ is inside region $2$ and its heading is inside the sector of angle $\rho \ge \frac{\pi}{2}$, it stays put while agent $s$ initially travels, at most until $j$ enters the sensing area of agent $s$. 
        }
      \label{UL_Shrink1}
\end{figure}

%

If agent $j$'s heading falls inside the $\rho\ge\frac{\pi}{2}$ angle, as illustrated in \cref{UL_Shrink1}, then it can sense all points in $[p_s(t),p_s(t+dt)]$ and thus will stay put throughout the initial motion of agent $s$ until $s$ stops at time $t+dt \le t+1$. Thus agent $j$ stays put during the entire initial motion of agent $s$ with probability at least $\frac{1}{4}$. 

Now suppose that agent $j$ is located in region $1$ shown in \cref{UL_Shrink1}. We can apply a similar reasoning by replacing $dt$ by $dt'\le dt$ where $t+dt'$ is the first time when ${p_s(t+dt')}$ comes within range $\delta$ of agent $j$'s initial position $p_j(t)$. Likewise we can define an angle $\rho'\ge\frac{\pi}{2}$ ensuring that if agent $j$'s heading falls within $\rho'$ then agent $j$ will sense all points within ${[p_s(t),p_s(t+dt')]}$ which implies to stay put within $[t,t+dt']$, and thus come within distance $\delta$ of agent $s$ during $s$'s initial displacement. This will happen with probability at least $\frac{1}{4}$.

With probability $\frac{1}{4}$ or larger, agent $j$ will not move, either during the entire initial movement of agent $s$ if $j$ is in region 2 or during the initial movement of agent $s$ until it comes within range $\delta$ of this agent if $j$ is in region 1, since, in both cases, agent $j$ will sense agent $s$ during its entire corresponding initial movement. From now on we will assume that this event occurs, i.e. that if $j$ is in region 2 then $j$ stays put during the entire initial movement of $s$, and if $j$ is in region 1 then $j$ stays put until it is within distance $\delta$ from $s.$

As agent $s$ starts to move, depending on whether it is in region 1 or 2, agent $j$ does one of the following: 
\begin{enumerate}[(i)]
\item stays put and becomes, due to the motion of $s$, within range of $\delta$ from $s$ and then $l_{sj}$ and $l_{js}$ will decrease by at least $\delta$ each.
\item remains stationary at a distance bigger than $\delta$ from $s$ for the entire motion of $s$, and in this case we can bound the decrease of $l_{sj}$ and $l_{js}$ as follows:
\end{enumerate}

Let us consider
\begin{equation}\label{Shrink_s_1}
	Shrink_{s,dt} = d_{sj}-\sqrt{d^2_{sj}+Step_s^2-2d_{sj}Step_s\cos{\theta_{sj}}} 
\end{equation}
where (see \cref{UL_Shrink1}) $d_{sj}$ is the mutual distance of agents $s$ and $j$ at the beginning of a time-interval $t$, and $\sqrt{d^2_{sj}+Step_s^2-2d_{sj}Step_s\cos{\theta_{sj}}}$ is their mutual distance at time $t+dt$ when agent $s$ first stops. Using \cref{dijNeverIncrease}, the shrink of distance $Shrink_s$ between agents $s$ and $j$ during the time-interval is at least: 
\begin{equation} 
	\label{eq: shrink_s at least shrink_s dt}
	Shrink_s \ge Shrink_{s,dt}.
\end{equation}

Let us consider $Shrink_{s,dt}$ as the value of a function $Sh$ of three variables $d$, $St$, $\theta$: 
$$
Sh(d, St, \theta) \overset{\Delta}{=} d-\sqrt{d^2+St^2-2dSt\cos\theta}.
$$
We have that 
\begin{equation}\label{dSh_dd}
\begin{split}
\frac{\partial Sh}{\partial d}=1-\frac{d-St\cos \theta}{\sqrt{d^2+St^2-2dSt\cos \theta}}\ge0\\
\forall \; d>0\;; \; St>0\;; \;\theta \in [0, \frac{\pi}{2})
\end{split}
\end{equation}

\begin{equation}\label{dSh_dtheta}
\begin{split}
\frac{\partial Sh}{\partial \theta}=-\frac{dSt\sin \theta}{\sqrt{d^2+St^2-2dSt\cos \theta}}\le0\\
\forall \;  d>0\;; \; St>0\;; \; \theta \in [0, \frac{\pi}{2})
\end{split}
\end{equation}

\begin{equation}\label{dSh_dSt}
\begin{split}
\frac{\partial Sh}{\partial St}=-\frac{St-d\cos \theta}{\sqrt{d^2+St^2-2dSt\cos \theta}}\ge0\\
\forall \; d>0\;; \;St\le d\cos\theta\;; \; \theta \in [0, \frac{\pi}{2})\\
\end{split}
\end{equation}
and
\begin{equation}
	\label{eq: Domain d St theta}
	d\in[\delta,\infty)\,;\,St\in[St_{min}, d\cos{\theta_s})\,;\,\theta\in[0, \theta_s)\subset[0,\frac{\pi}{2}).
\end{equation}
If agent $j$ is in region $2$, as shown in \cref{UL_Shrink1}, then due to \cref{Theta_s,dSh_dd,dSh_dtheta,dSh_dSt,eq: Domain d St theta}, we have 
\begin{equation}\label{ShrinkBound}
Sh \geq Sh (d_{min}, \theta_{max}, St_{min})  \overset{\Delta}{=} Sh_{min}
\end{equation}
where $d_{min} = \delta$, $\theta_{max} = \frac{\pi}{2}(1-\frac{1}{n})$, and $St_{min} = Step_{min}$, thus:
\begin{equation}
	\label{eq: Shrink_bound_general}
	Sh_{min} = \delta-\sqrt{\delta^2+Step_{min}^2-2\delta Step_{min}\sin{\frac{\pi}{2n}}}\ .
\end{equation}

For finite $n\ge 1$ we can show that $Sh_{min}$ is a constant bounded away from zero
\begin{equation}
	\label{eq: Sh_min pos}
	Sh_{min} > 0,
\end{equation}
see \cref{App: Proof of sh_min positive} for full details. Note that:
\begin{equation}
	\label{eq: Sh_min smaller delta}
	Sh_{min} \le \delta.
\end{equation}

We have therefore proven that if agent $j$ is further than $\delta$ away from agent $s$, the agent at the sharpest corner of the convex-hull, at the beginning of the time-interval, and in case of a ``successful time-interval'', i.e. the heading of agent $s$ is within $\beta_s^*$, then, due to \cref{eq: shrink_s at least shrink_s dt}, the Lyapunov function will decrease by at least a strictly positive constant $Shrink_{min}$, which is the minimum of $2\delta$ and $2Sh_{min}$. By \cref{eq: Sh_min pos,eq: Sh_min smaller delta}, we get: 
\begin{equation}
	\label{eq: Shrink_min positive}
	Shrink_{min} = \min\{2\delta,2Sh_{min}\} = 2Sh_{min} >0.
\end{equation}

We shall call such an event, described in the previous paragraph and giving a shrink in the Lyapunov function of at least $Shrink_{min}$, a ``fully successful time-interval'', abbreviated ``$\textit{FSTI}$''. Since the headings of agent $j$ and $s$ are drawn independently, the event ``$j$ stays put during the initial motion of $s$'' happens with probability greater than $\frac{1}{4}$ conditionally to the event ``successful time-interval'', which itself happens with probability greater than $\frac{1}{2n}$ (given in \cref{UL_SharpestAngleToMove}). Therefore the probability of a ``$\textit{FSTI}$'', which is the intersection of both previous events, to occur is lower-bounded by 
\begin{equation}
	\label{eq: Pr(FSTI)}
	Pr(\textit{FSTI}) \ge p \overset{\Delta}{=}  \frac{1}{8n}. 
\end{equation}

\end{proof}

To complete the proof of \cref{Theorem}, the Lyapunov function $\mathcal{L}(P(t))$ will become zero when there is a point in $\mathbb{R}^2$ whose distance to all agents is smaller then $\delta$. We study the expected number of time-intervals necessary for this to happen.

Recall that if $Q$ is an event that occurs in a trial with probability $q$, we have that the mathematical expectation $E$ of the number of trials $k_Q$ for the first occurrence of $Q$ in a sequence of independent trials is
\begin{equation}
	\label{eq: exp k_Q}
	E[k_Q]=q+2(1-q)q+3(1-q)^2q+\hdots=\sum_{k=1}^{\infty}k(1-q)^{k-1}q=\frac{1}{q}.
\end{equation}

In our case, the probability to have a ``fully successful time-interval'' is greater than $p$ as given in \cref{eq: Pr(FSTI)} in all trials, and this event is independent from the past intervals. Thus, using \cref{eq: exp k_Q}, the expectation of the first occurrence of a ``fully successful time-interval'' is finite and upper bounded as follows:
\begin{equation}
	E[k_{\textit{FSTI}}] \le \frac{1}{p} = 8n. 
\end{equation}

Once we have reached a ``fully successful time-interval'', having the next ``$\textit{FSTI}$'' in future intervals is independent from the first one, and we can apply the same reasoning for the second occurrence of this event. By reiterating this argument, we thus get that the expectation of the number of time-intervals $t_{cv}$ for convergence inside a disc of radius $\delta$ is finite and is upper-bounded by
\begin{equation}
	\label{eq: E time conv general}
	E(t_{cv}) \le \ceil[\bigg]{\frac{\mathcal{L}(P(0))}{Shrink_{min}}} 8n 
\end{equation}
since clearly the Lyapunov function equals zero is equivalent to all agents are confined in a region of radius $\delta$. Since the initial value of the Lyapunov function is less than $n(n-1)d_{max}(0)$ (as in the chosen Lyapunov each ``edge" is counted twice), and due to \cref{ShrinkBound,eq: E time conv general}, the expected number of time-intervals to convergence of the system is upper-bounded as follows
\begin{equation}
\label{eq: E time conv general explicit}
	E(t_{cv})\leq 8n\ceil[\Bigg]{\frac{n(n-1)d_{max}(0)}{2\delta\left[ 1-\sqrt{1+(\frac{Step_{min}}{\delta})^2-2 (\frac{Step_{min}}{\delta})\sin{\frac{\pi}{2n}}}\right] }} 
\end{equation}
which is finite, and dependent on the initial constellation, the number of agents $n$, and the radius of the blind-zone $\delta$. We have, following \cref{GeometricStep}, that $\frac{Step_{min}}{\delta} = \min\{\tan{\frac{\pi}{4n}},\frac{1}{\delta}\}$. In particular, for $\delta$ sufficiently small, i.e. $\delta\le\frac{1}{\tan{\frac{\pi}{4n}}}$, that:
\begin{equation}
\label{eq: E time conv small delta explicit}
\begin{split}
	&E(t_{cv})\leq 8n\ceil[\Bigg]{\frac{n(n-1)d_{max}(0)}{2\delta\left[ 1-\sqrt{1+(\tan{\frac{\pi}{4n}})^2-2 (\tan{\frac{\pi}{4n}})\sin{\frac{\pi}{2n}}}\right] }}. 
\end{split}
\end{equation}
Note how the upper-bound is approximately proportional to $\frac{1}{\delta}$. This suggests that the blind-zone could be necessary in order to ensure finite expected time convergence.

\end{proof}

\subsection{Simulation Results and System Behaviour}

In order to approximate continuous sensing in numerical simulations, we further divided each unit time-step $\Delta t$ into smaller time-steps $dt$. In each of these smaller time intervals, the agents' dynamics are similar to those in the discrete case defined in \cref{eq:Dynamics}, except that the agents keep the same heading and jump only a distance of $dt$. Every $\frac{1}{dt}$ small time-steps, we have reached the end of the unit time-step and all agents randomly change their heading. In order to remain within an approximation of the continuous setting, we need to choose $dt$ such that $dt < \delta$, therefore various choices are made for different sizes of blind-zone radii.

Up to scale, the dynamics of the agents are clearly invariant: given a sequence of random orientations, if the edge of the square domain size is multiplied by $\lambda$, and if we multiply the blind-zone radius $\delta$ of agents by $\lambda$, then the dynamics of the agents are exactly a $\lambda$-scaled version of the dynamics they would have been with the same randomised sequence of orientation in the unit square with blind-zone radius $\delta$. We therefore choose to set the domain to a fixed size in all our simulations: agents are initially randomly and uniformly placed in a unit square. 

Given this choice, our domain of interest for the size of the blind-zone is $\delta\le 1$. Since $\tan{\frac{\pi}{4n}} \le 1$ for $n\ge1$ agents, then such choices of $\delta$ imply $\frac{1}{\delta}\ge 1 \ge \tan{\frac{\pi}{4n}}$ therefore the expected time to convergence is given by \cref{eq: E time conv small delta explicit} in our simulations.

Furthermore, in order to avoid undesired bias due to the initial distribution of the points, agents initial positions are randomly and independently resampled for each run. Thus the initial Lyapunov value or likewise the initial maximum distance between any two agents is a random value. However, we can bound these values given our choice of fixed domain size. For instance, thanks to Pythagoras, $d_{max}(0)\le \sqrt{2}$ and thus \cref{eq: E time conv small delta explicit} becomes
\begin{equation}
	\label{eq: E time conv small delta explicit fixed square experiments}
	\begin{split}
		&E(t_{cv})\leq 8n\ceil[\Bigg]{\frac{\sqrt{2}n(n-1)}{2\delta\left[ 1-\sqrt{1+(\tan{\frac{\pi}{4n}})^2-2 (\tan{\frac{\pi}{4n}})\sin{\frac{\pi}{2n}}}\right] }},
	\end{split}
\end{equation}
which we will use as a reference bound when considering several runs. Note that here we have made an abuse of notation. Indeed, in \cref{eq: E time conv small delta explicit}, the time for convergence $t_{cv}$ implicitly depends on the initial configuration of the agents and so the expected value is implicitly conditioned with respect to it. To numerically approximate it, we should simulate many runs starting from the same initial distribution. However, in practice, we desire to have a quantity that would only depend on the parameters of the initial swarm (such as their number, spread, and blind-zone size), rather than a quantity biased by the initial distribution. This is why we allow ourselves to resample the distribution of the agents at each run, and the empirical average convergence time we get is actually an estimate of this quantity depending only on the initial parameters of the swarm rather than on its explicit distribution.

Typical simulation results of gathering are shown in \cref{fig:continuous evolution}. As expected, the systems converge to a disk of radius $\delta$, whose centroid wanders in the plane.

\begin{figure}[tbhp]
    \centering
    \subfloat[$n=5$, $\delta=0.02$]{\label{fig: Continuous_memoryless_n_5_delta_0.02}\includegraphics[width=0.45\textwidth]{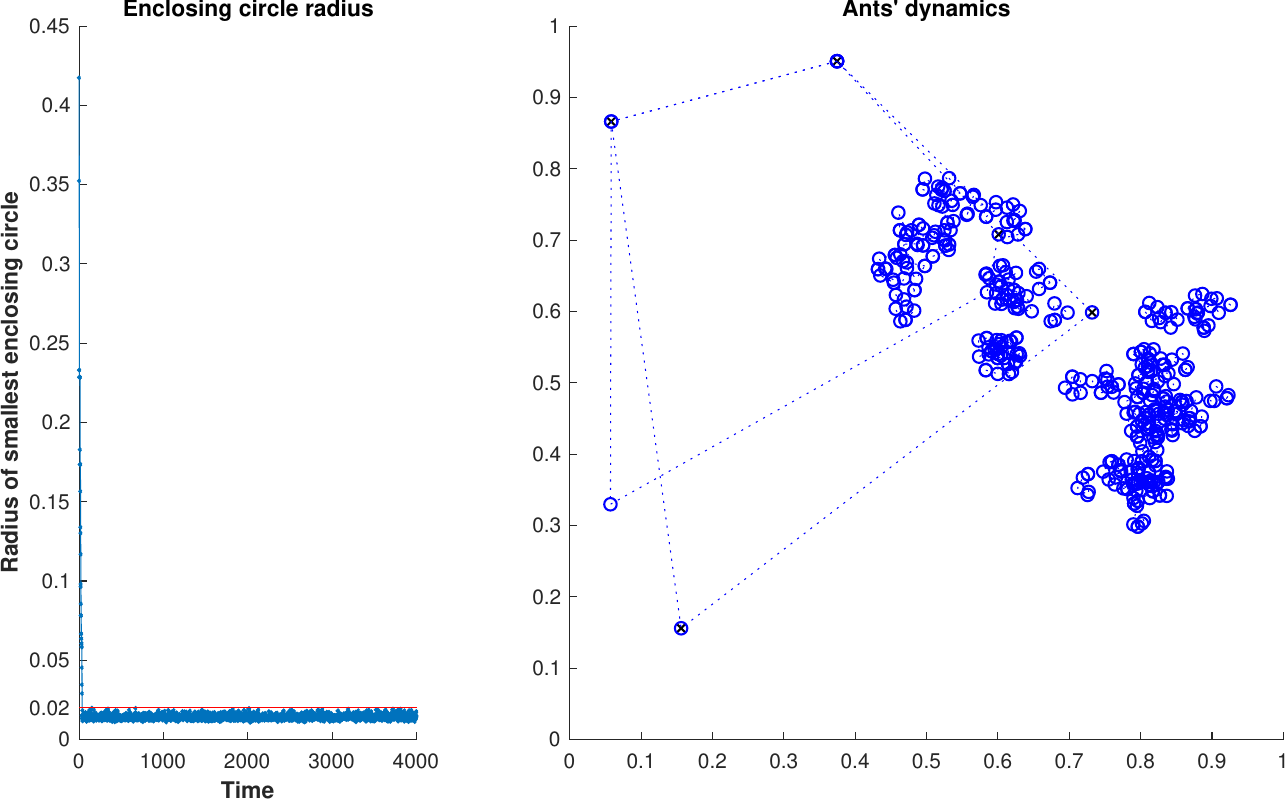}}
    \subfloat[$n=10$, $\delta=0.02$]{\label{fig: Continuous_memoryless_n_10_delta_0.02}\includegraphics[width=0.45\textwidth]{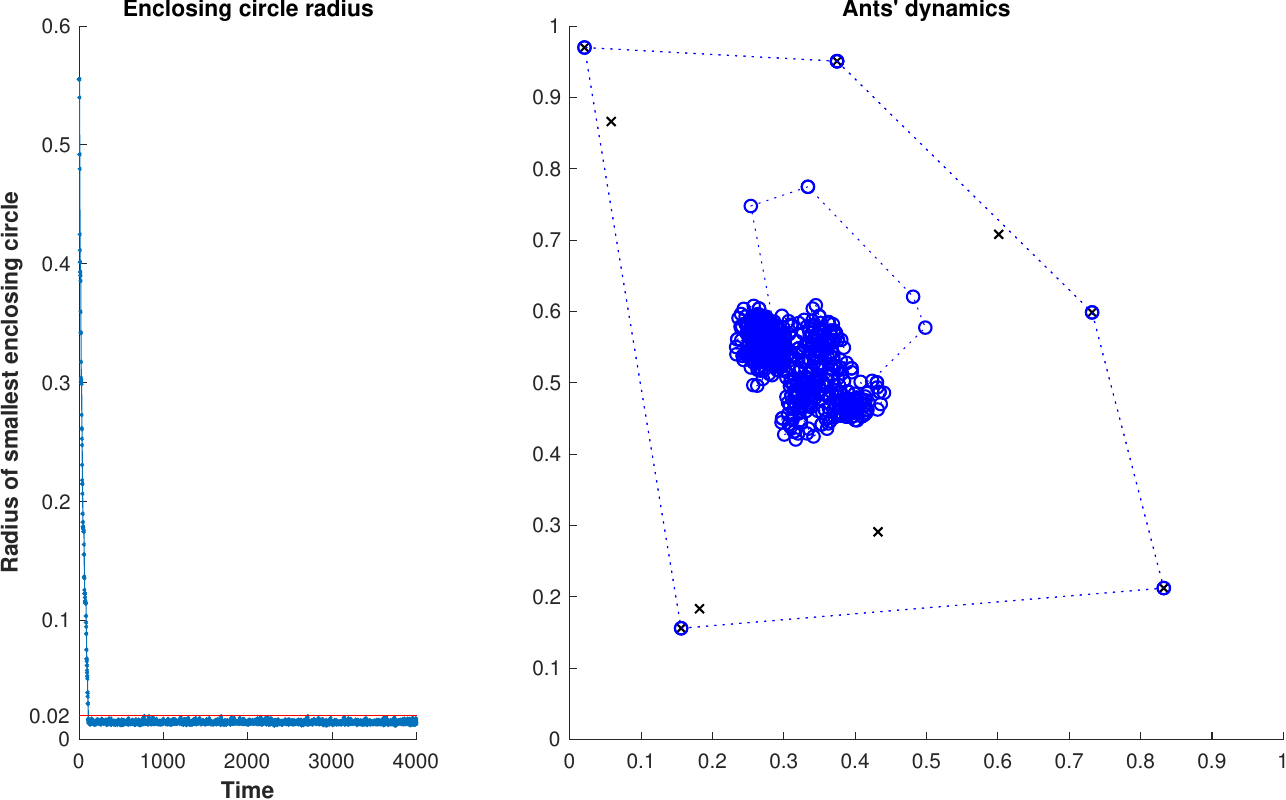}}\\
    \subfloat[$n=50$, $\delta=0.02$]{\label{fig: Continuous_memoryless_n_50_delta_0.02}\includegraphics[width=0.45\textwidth]{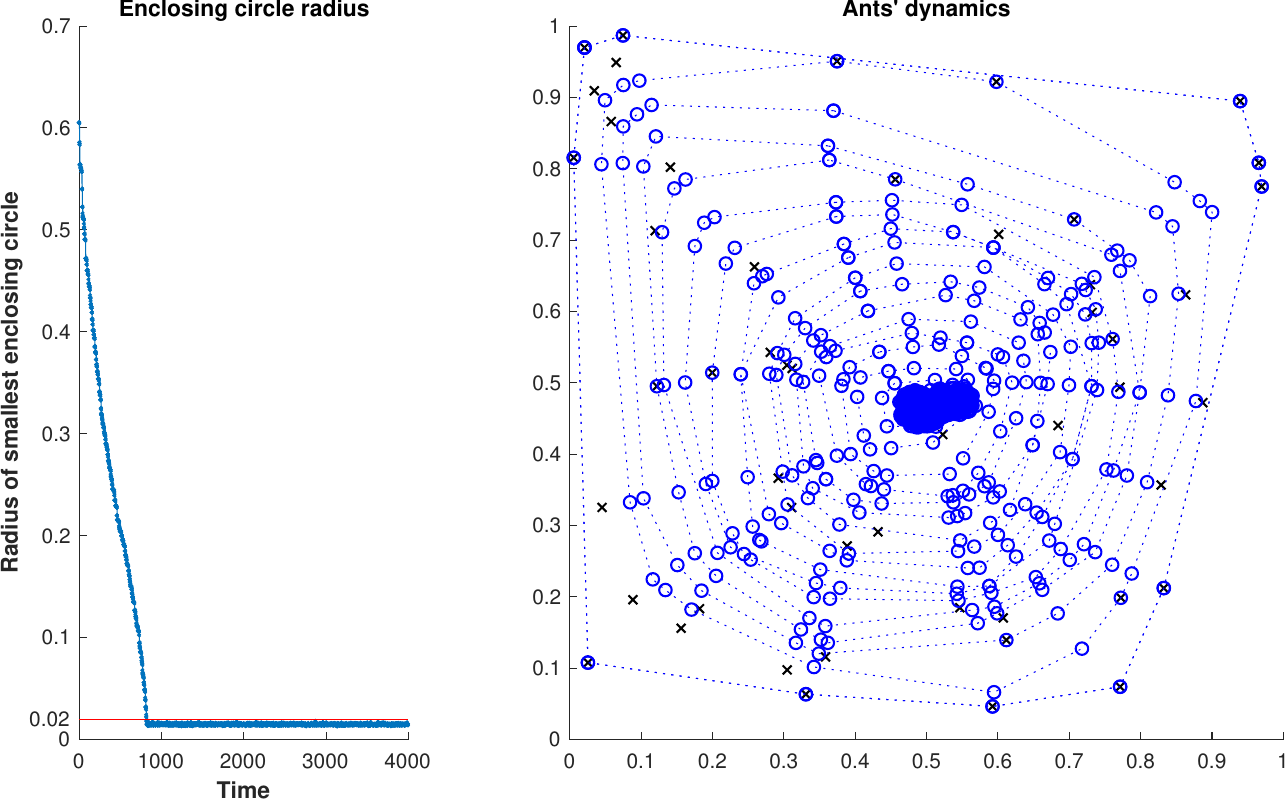}}
    \subfloat[$n=200$, $\delta=0.02$]{\label{fig: Continuous_memoryless_n_200_delta_0.02}\includegraphics[width=0.45\textwidth]{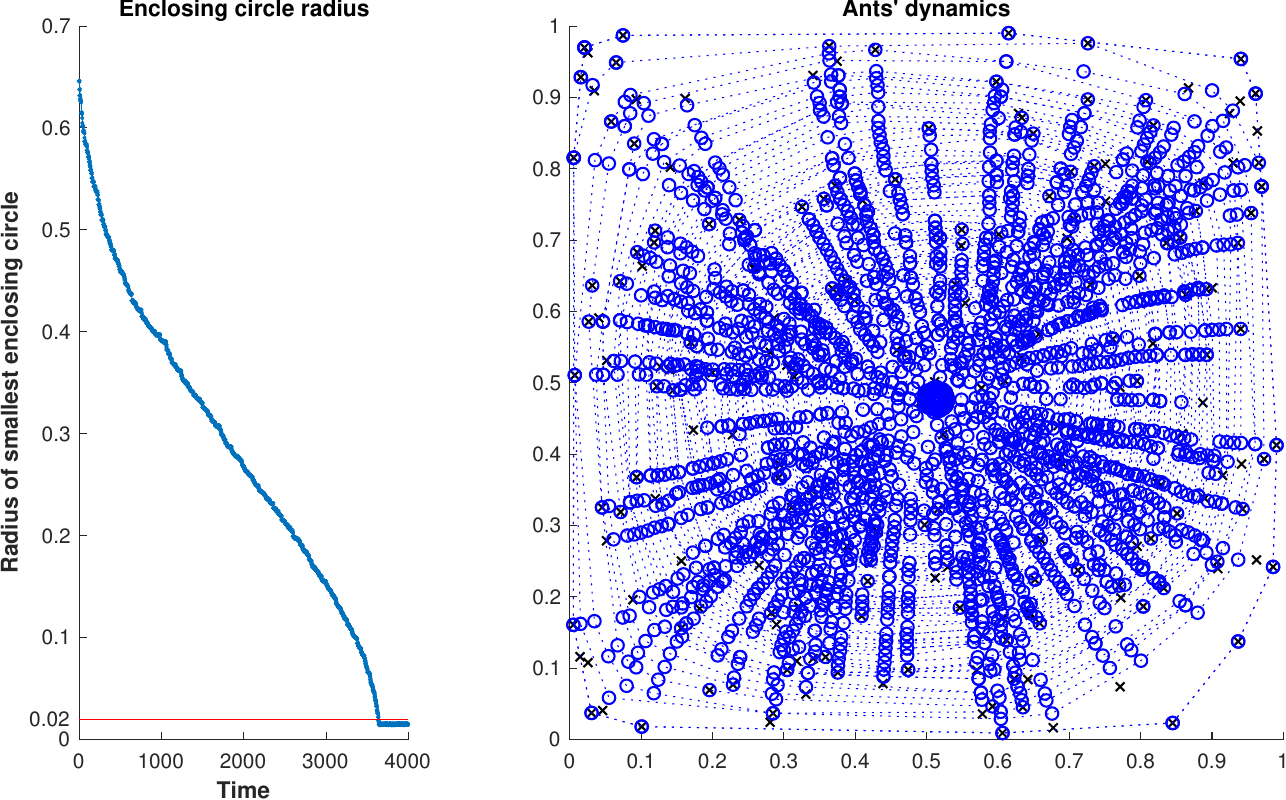}}\\
    \subfloat[$n=10$, $\delta=0.005$]{\label{fig: Continuous_memoryless_n_10_delta_0.005}\includegraphics[width=0.45\textwidth]{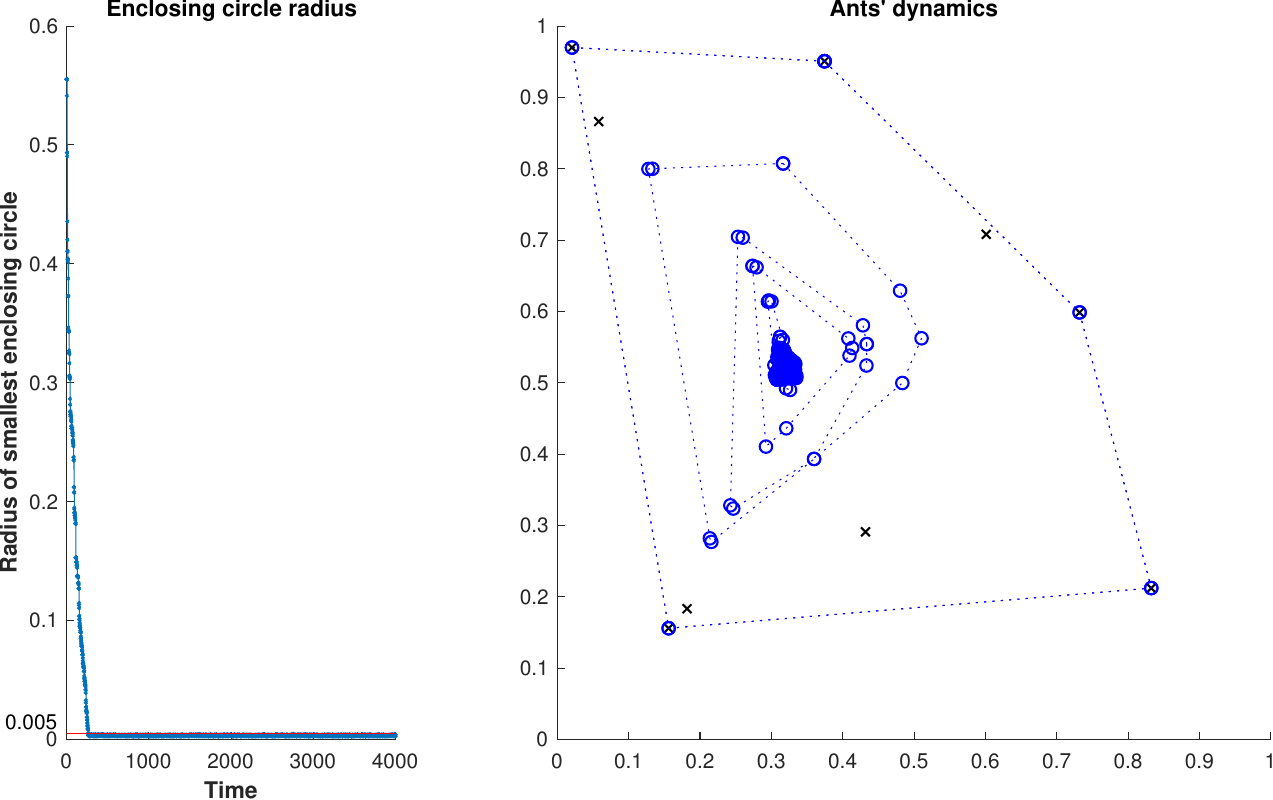}}
    \subfloat[$n=10$, $\delta=0.001$]{\label{fig: Continuous_memoryless_n_10_delta_0.001}\includegraphics[width=0.45\textwidth]{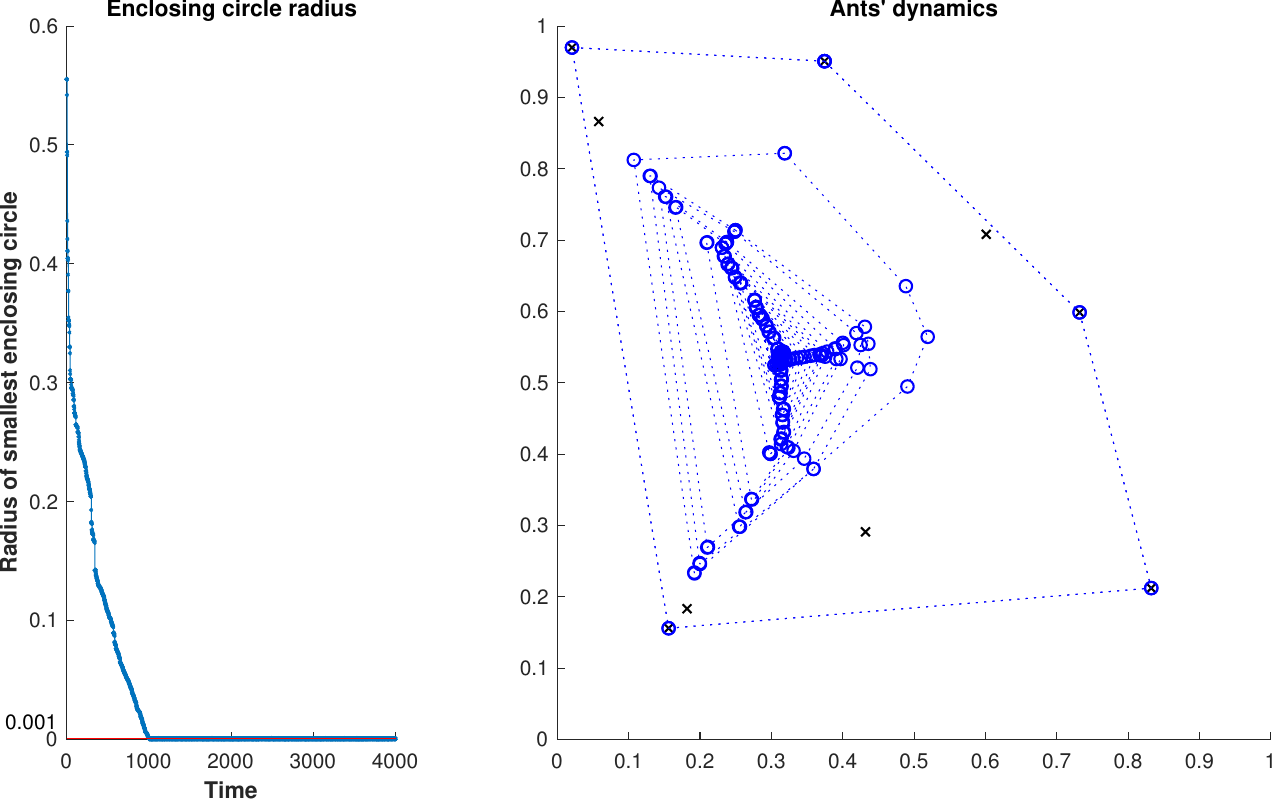}}
    \caption{Simulation results on various number of agents $n$ for the continuous dynamics algorithm and several blind-zone radii $\delta$, with initial random spread in a $1$ by $1$ area. The smaller step-size $dt$ is $0.01$ (respectively $0.002$ and $0.0005$) for $\delta$ equal to $0.02$ (respectively $0.005$ and $0.001$). On the right part of each figure,
    the convex-hull of the system and its associated agents are printed every $50$ time-steps (marked with dashed lines and $\circ$). The initial position of the agents is marked with $\times$. The curve on the left part of each figure shows the decrease in the radius of the smallest enclosing circle of the agents constellation.}
    \label{fig:continuous evolution}
\end{figure}

We found that, for estimating the expected time of convergence, a choice of $1{,}000$ runs provides a reliable estimator. A detailed analysis of this choice is presented as supplementary material in \cref{an subsec: Continuous analysis nb runs}.





\Cref{fig:continuous delta analysis curves} summarize $1{,}000$ simulation results with different blind-zone sizes $\delta$, using $n=10$ agents spread uniformly on the same initial area. Notice how the convergence time is proportional to $\frac{1}{\delta}$, similarly to what we found in the bound \cref{eq: E time conv small delta explicit}. As expected, the average convergence time is below this theoretical bound. However, these results show how pessimistic this worst-case bound is, since it is systematically several orders of magnitude higher than the empirical convergence time. Indeed, the slope of the weighted least-squares interpolation of the results is $1.30$ unit time-steps per invert radius distance of the blind-zone, compared to $5.50\times 10^{5}$ with the same units for the theoretical bound.

\begin{figure}[tbhp]
    \centering
    \subfloat[]{\label{fig: Continuous_convergenceTime_VS_delta_res}\includegraphics[width=0.6\textwidth]{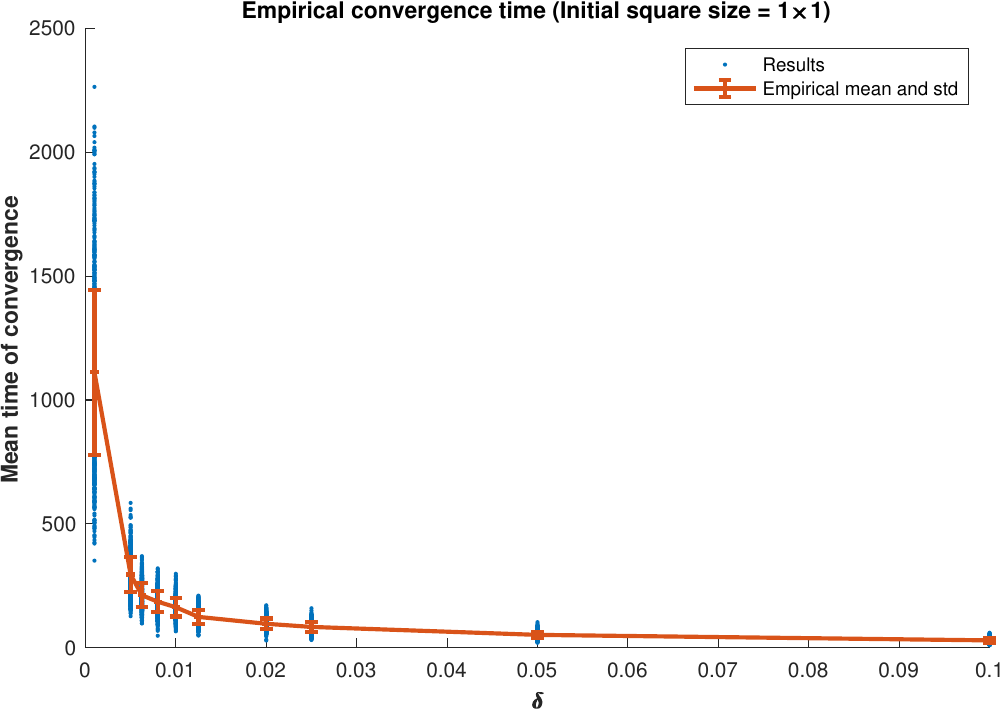}}\\
    \subfloat[]{\label{fig: Continuous_convergenceTime_VS_delta_res_bound}\includegraphics[width=0.6\textwidth]{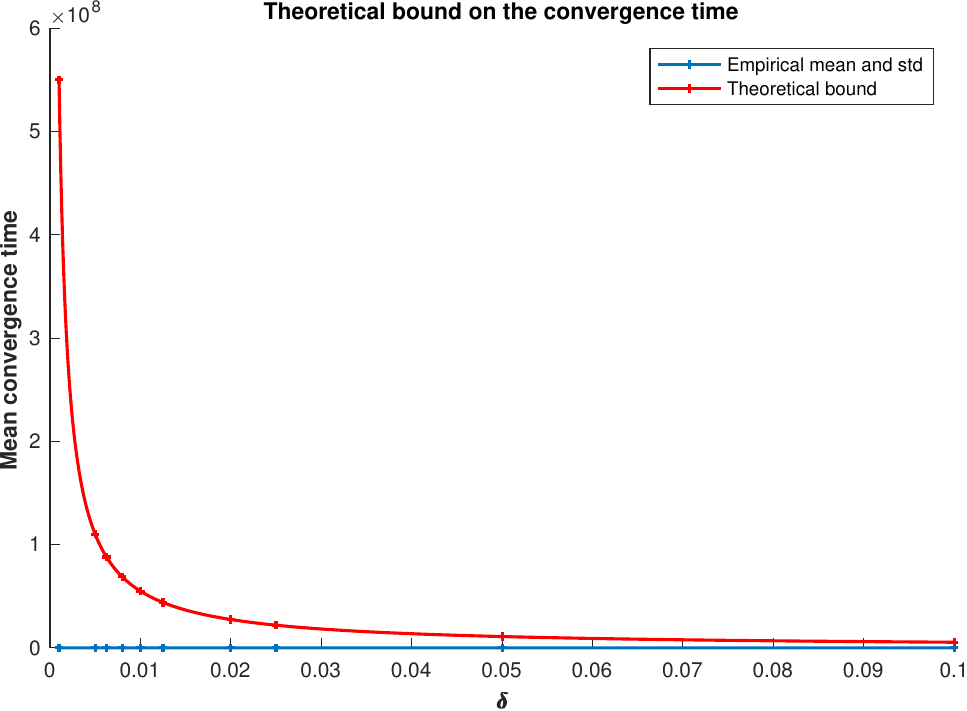}}\\
    \subfloat[]{\label{fig: Continuous_convergenceTime_VS_inv_delta}\includegraphics[width=0.6\textwidth]{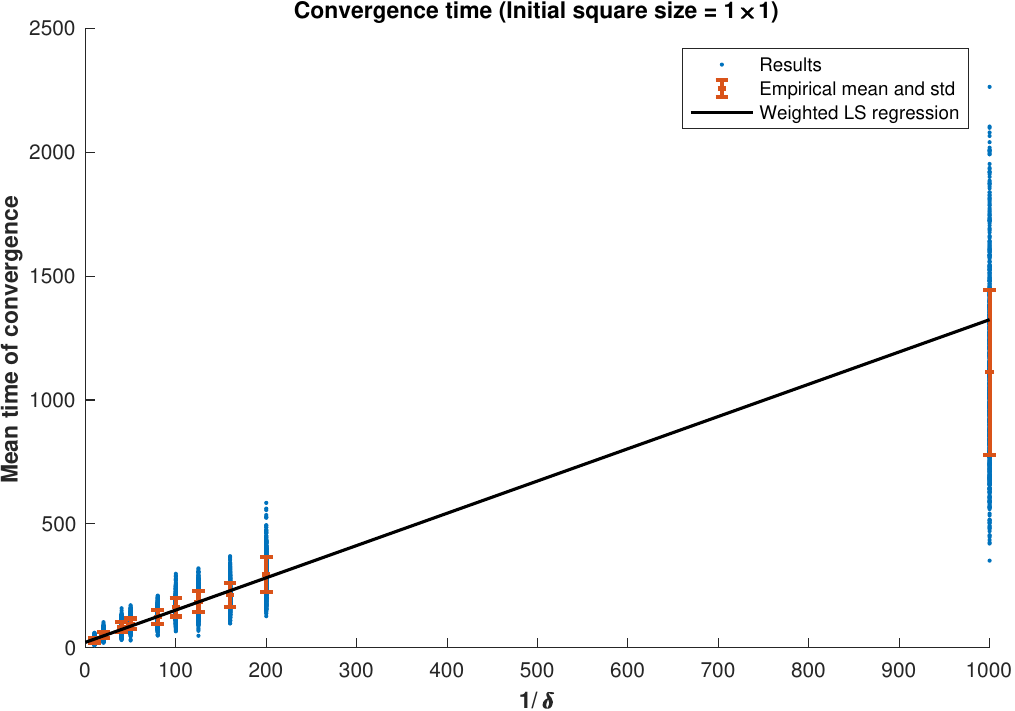}}
    \caption{Analysis of the influence of the radius $\delta$ of the blind-zone on the convergence time. All simulations were set to an initial random spread of $10$ agents in a $1$ by $1$ area. Convergence time was taken for the first time when all agents were gathered in a circle of radius $\delta$. This process ran for $1{,}000$ repetitions with different random initial constellations. \Cref{fig: Continuous_convergenceTime_VS_delta_res}: convergence time vs. radius $\delta$. \Cref{fig: Continuous_convergenceTime_VS_delta_res_bound}: convergence time vs. radius $\delta$ and with the bound given in \cref{eq: E time conv small delta explicit}. \Cref{fig: Continuous_convergenceTime_VS_inv_delta}: convergence  time vs. inverse of the radius $\delta$. On the results in \cref{fig: Continuous_convergenceTime_VS_delta_res,fig: Continuous_convergenceTime_VS_inv_delta}, we superimpose the empirical mean and standard deviation of the results. In \Cref{fig: Continuous_convergenceTime_VS_inv_delta}, the linear graph was obtained using weighted linear least squares fitting to the average results, with weights equal to the squared inverse of the empirical standard deviation of the data to take into account the non uniform variance of the results.}
    \label{fig:continuous delta analysis curves}
\end{figure}



\Cref{fig:continuous n analysis curves} summarize $1{,}000$ simulations with different number of agents, spread uniformly over the same initial area and with same blind-zone radius $\delta = 0.02$ and small time-step $dt = 0.01$. Empirically we find that the effect of the number of agents on the convergence time of the system is linear, similarly to the discrete case. However, in \cref{eq: E time conv small delta explicit fixed square experiments}, the bound is $O(n^3)$, which once more reveals how pessimistic our worst case is. The slope of the weighted least-squares interpolation of the results is approximately $17$ unit time-steps per number of agents, which remarkably is not only within the same order of magnitude as the slope in the discrete version, but also smaller. This suggests that the continuous convergence scheme leads faster to convergence than the discrete rules of motion. This could be due to the fact that in the continuous version overshoot phenomena do not occur: if for instance an agent is allowed to jump but not towards the center of the cluster it only moves little by little and stops when movement negatively impacts convergence. 
 However, simulation runtimes are significantly longer in the continuous case as we need to perform an extra loop of iterations within each unit time-interval.

\begin{figure}[tbhp]
    \centering
    \subfloat[]{\label{fig: Continuous_convergenceTime_VS_n_interp}\includegraphics[width=0.6\textwidth]{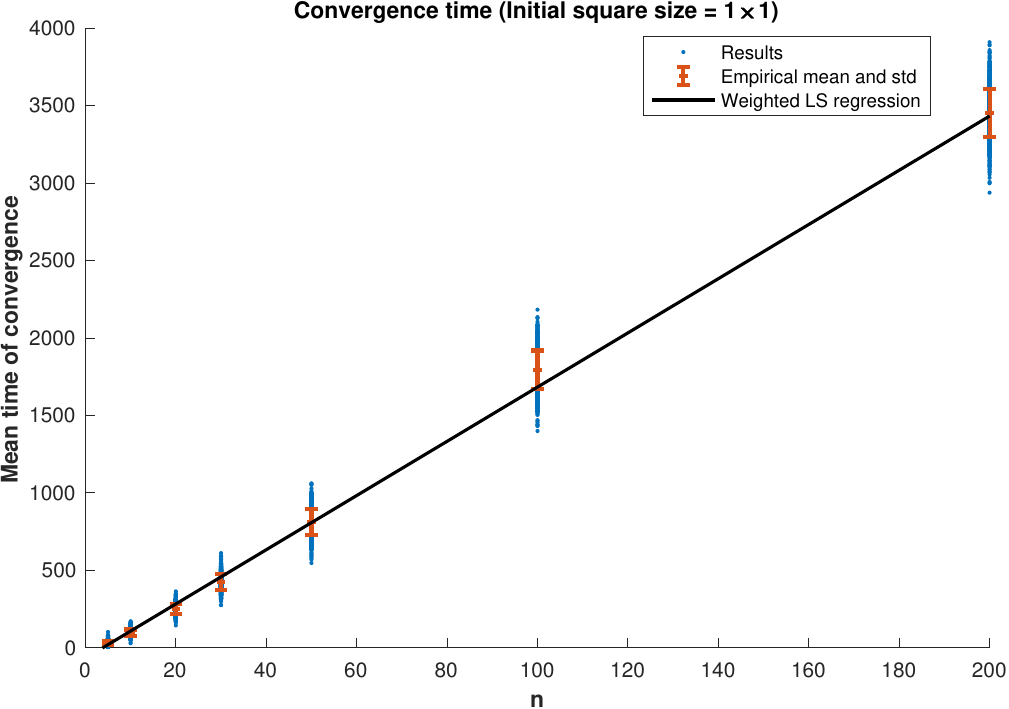}}\\
    \subfloat[]{\label{fig: Continuous_convergenceTime_VS_n_res_bound}\includegraphics[width=0.6\textwidth]{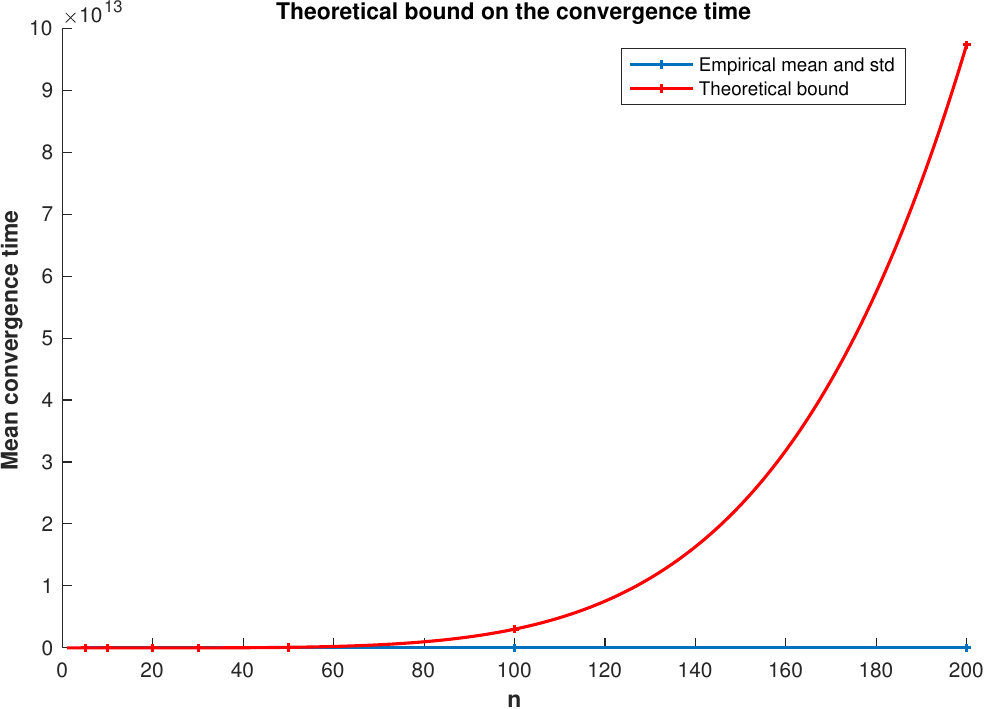}}\\
    \caption{Analysis of the influence of the number of agents $n$ on the convergence time. All simulations were set to an initial random spread of $n$ agents in a $1$ by $1$ area with agents' blind-zone of radius $\delta = 0.02$. Convergence time was taken for the first time when all agents were gathered in a circle of radius $\delta$. This process ran for $1{,}000$ repetitions with different random initial constellations. \Cref{fig: Continuous_convergenceTime_VS_n_interp}: convergence  time vs. $n$, on which we superimpose the empirical mean and standard deviation of the results. The linear graph was obtained using weighted linear least squares fitting to the average results, with weights equal to the squared inverse of the empirical standard deviation of the data to take into account the non uniform variance of the results. \Cref{fig: Continuous_convergenceTime_VS_n_res_bound}: convergence time vs. $n$ and with the bound given in \cref{eq: E time conv small delta explicit}.}
    \label{fig:continuous n analysis curves}
\end{figure}



\section{Conclusion}

We proposed and analyzed two randomized gathering processes for identical, anonymous, oblivious mobile agents, only capable to sense the presence of other agents behind their motion direction. The agents act synchronously, and at unit time-intervals they randomly select new forward motion orientations. We proved that the ``continuous version" of the process ensures gathering to within a region of diameter $2\delta$ where $\delta$ is a parameter setting a ``blind-zone" in sensing nearby agents. Gathering happens in finite expected time, proportional to $\delta^{-1}$. This result also suggests that the ``blind-zone'' is absolutely necessary for finite expected time convergence.

The fully discrete model, in which agents perform unit jumps forward if no agents are detected behind them, was also found empirically  to gather the agents to a minimal enclosing circle of radius randomly varying around $\frac{1}{2}$, in time proportional to the number of agents. This happens in all cases we tested, however a proof of this result has not yet been found and will certainly involve probabilistic convergence arguments. 

We are currently investigating the dynamics of the randomly wandering cluster of agents once gathering has been achieved. We expect to show that the centroid of the system then performs a random walk, or at least undertakes unbiased dynamics. If such is the case, we may be able to guide the swarm by extending our previous work \cite{barel2018steering} were we presented an algorithm for externally controlled steering of swarms of indistinguishable agents, in spite of the agents' lack of information on absolute location and orientation.

\bigskip
\begin{center}
    \footnotesize APPENDICES
\end{center}

\begin{appendices}

\section{Proof of \cref{NeverLoseFriendCorollary}}
\label{App: Proof Bourbaki Never Lose Friend}
Assume $d_{i,j}(t_1)  \le \delta$ with $i\neq j$. Assume that there is $t_2>t_1$ such that $d_{i,j}(t_2)>\delta$. By continuity of $d_{i,j}$ thanks to the continuity of displacements, the intermediate value theorem gives us $t_3\in[t_1,t_2[$ such that $d_{i,j}(t_3) = \delta$. In particular, we can take $t_3 = \sup \{ \tilde{t} \mid \tilde{t}\in[t_1,t_2[ \textrm{ and } d_{i,j}(\tilde{t}) = \delta \}$. By continuity of $d_{i,j}$ we get $d_{i,j}(t_3) = \delta$. If there is a value of $\tilde{t}\in]t_3,t_2[$ such that $d_{i,j}(\tilde{t})\le\delta$, then we can once again apply the intermediate value theorem and get $t_4\in ]t_3,t_1[$ such that $d_{i,j}(t_4) = \delta$ which contradicts the maximality assumption on $t_3$. Therefore ${d_{i,j}}_{|]t_3,t_1[}>\delta$. 
We then have, using \cref{NeverLoseFriend}, that the right derivative of ${d_{i,j}}$ is negative on $[t_3,t_1[$. For any $t_5 \in ]t_3,t_1[ $, we have $d_{i,j}(t_5) \le d_{i,j}(t_3) = \delta$, which is a contradiction since ${d_{i,j}}_{|]t_3,t_1[}>\delta$.
\hfill\begin{tikzpicture}\draw (0,0) rectangle (0.15,0.225); \end{tikzpicture}

\section{Proof of \cref{eq: Sh_min pos}}
\label{App: Proof of sh_min positive}
Consider the case $\delta\tan{\frac{\pi}{4n}} \ge1$. This implies that $Step_{min} = \delta\tan{\frac{\pi}{4n}}$. We thus have
\begin{align}
	Sh_{min}>0 &\iff \tan^2{\frac{\pi}{4n}} -2\tan{\frac{\pi}{4n}}\sin{\frac{\pi}{2n}} < 0 \nonumber \\
	&\iff \tan{\frac{\pi}{4n}}-2\sin{\frac{\pi}{2n}}< 0 \label{eq: tan minus 2 sin negative} \\
	&\iff \frac{\sin{\frac{\pi}{4n}}}{\cos{\frac{\pi}{4n}}\times 2\cos{\frac{\pi}{4n}}\sin{\frac{\pi}{4n}}} < 2 \nonumber \\
	&\iff \cos^2{\frac{\pi}{4n}} > \frac{1}{4} \nonumber \\
	&\iff \frac{\pi}{4n} < \frac{\pi}{3} \nonumber \\
	&\iff n>\frac{3}{4}  \label{eq: n bigger than 3 quarters}
\end{align}
which is always true since trivially $n\ge1$.

Now consider the opposite case when $1<\delta\tan{\frac{\pi}{4n}}$. Thus $Step_{min} = 1$ and then
\begin{equation}
	\label{eq: sh min vdt smaller tan}
	Sh_{min}>0 \iff 1-2\delta\sin{\frac{\pi}{2n}} <0.
\end{equation}

However we know that by assumption $1<\delta\tan{\frac{\pi}{4n}}$, which thus gives
$${1-2\delta\sin{\frac{\pi}{2n}} < \delta(\tan{\frac{\pi}{4n}}-2\sin{\frac{\pi}{2n}})}.$$
According to \cref{eq: tan minus 2 sin negative,eq: n bigger than 3 quarters} and since $n\ge1$, we have that \cref{eq: sh min vdt smaller tan} is always true.

We have thus proved that in all cases $Sh_{min}$ is a strictly positive constant.
\hfill\begin{tikzpicture}\draw (0,0) rectangle (0.15,0.225); \end{tikzpicture}

\end{appendices}


\bibliographystyle{siamplain}
\bibliography{references}

\begin{thebibliography}{10}

\bibitem{ando1999}
{\sc H.~Ando, Y.~Oasa, I.~Suzuki, and M.~Yamashita}, {\em Distributed
  memoryless point convergence algorithm for mobile robots with limited
  visibility}, Robotics and Automation, IEEE Transactions on, 15 (1999),
  pp.~818--828,
  \url{http://ieeexplore.ieee.org/xpls/abs_all.jsp?arnumber=795787&tag=1}.

\bibitem{barel2018steering}
{\sc A.~Barel, R.~Manor, and A.~M. Bruckstein}, {\em On steering swarms}, in
  International Conference on Swarm Intelligence, Springer, 2018, pp.~403--410.

\bibitem{barel2019come}
{\sc A.~Barel, R.~Manor, and A.~M. Bruckstein}, {\em Come together: multi-agent
  geometric consensus (gathering, rendezvous, clustering, aggregation)}, arXiv
  preprint arXiv:1902.01455,  (2019).

\bibitem{barel2018previousSimple}
{\sc A.~Barel, R.~Manor, and A.~M. Bruckstein}, {\em Probabilistic gathering of
  agents with simple sensors}, arXiv preprint arXiv:1902.00294,  (2019).

\bibitem{bellaiche2015}
{\sc L.~I. Bellaiche and A.~M. Bruckstein}, {\em Continuous time gathering of
  agents with limited visibility and bearing-only sensing}, tech. report, CIS
  Technical Report, TASP, 2015.

\bibitem{bruckstein1991ants}
{\sc A.~M. Bruckstein, N.~Cohen, and A.~Efrat}, {\em Ants, crickets and frogs
  in cyclic pursuit}, Technion CIS report 9105, 1991.

\bibitem{gazi2003stability}
{\sc V.~Gazi and K.~M. Passino}, {\em Stability analysis of swarms}, IEEE
  Transactions on Automatic Control, 48 (2003), pp.~692--697.

\bibitem{gordon2008}
{\sc N.~Gordon, Y.~Elor, and A.~M. Bruckstein}, {\em Gathering multiple robotic
  agents with crude distance sensing capabilities}, in Ant Colony Optimization
  and Swarm Intelligence, vol.~5217 of Lecture Notes in Computer Science,
  Springer Berlin Heidelberg, 2008, pp.~72--83.

\bibitem{gordon2004}
{\sc N.~Gordon, I.~A. Wagner, and A.~M. Bruckstein}, {\em Gathering multiple
  robotic a (ge) nts with limited sensing capabilities}, in Ant Colony
  Optimization and Swarm Intelligence, vol.~3172 of Lecture Notes in Computer
  Science, Springer, 2004, pp.~142--153.

\bibitem{gordon2005}
{\sc N.~Gordon, I.~A. Wagner, and A.~M. Bruckstein}, {\em A randomized
  gathering algorithm for multiple robots with limited sensing capabilities},
  in Proc. of MARS 2005 workshop at ICINCO Barcelona, 2005.

\bibitem{jadbabaie2003}
{\sc A.~Jadbabaie, J.~Lin, and A.~S. Morse}, {\em Coordination of groups of
  mobile autonomous agents using nearest neighbor rules}, Automatic Control,
  IEEE Transactions on, 48 (2003), pp.~988--1001,
  \url{https://doi.org/10.1109/TAC.2003.812781}.

\bibitem{ji2007}
{\sc M.~Ji and M.~B. Egerstedt}, {\em Distributed coordination control of
  multi-agent systems while preserving connectedness.}, Robotics, IEEE
  Transactions on, 23 (2007), pp.~693--703,
  \url{http://smartech.gatech.edu/handle/1853/38560}.

\bibitem{manor2016Chase}
{\sc R.~Manor and A.~M. Bruckstein}, {\em Chase your farthest neighbour: A
  simple gathering algorithm for anonymous, oblivious and non-communicating
  agents}, in Proc. 13th Int. Symp. Distrib. Auton. Robotic Syst, 2016.

\bibitem{olfati2006flocking}
{\sc R.~Olfati-Saber}, {\em Flocking for multi-agent dynamic systems:
  Algorithms and theory}, IEEE Transactions on automatic control, 51 (2006),
  pp.~401--420.

\bibitem{olfati2007consensus}
{\sc R.~Olfati-Saber, J.~A. Fax, and R.~M. Murray}, {\em Consensus and
  cooperation in networked multi-agent systems}, Proceedings of the IEEE, 95
  (2007), pp.~215--233.

\bibitem{suzuki1999distributed}
{\sc I.~Suzuki and M.~Yamashita}, {\em Distributed anonymous mobile robots:
  Formation of geometric patterns}, SIAM Journal on Computing, 28 (1999),
  pp.~1347--1363.

\end{thebibliography}

\clearpage

\begin{center}
\textbf{\large SUPPLEMENTARY MATERIALS: PROBABILISTIC GATHERING OF AGENTS WITH SIMPLE SENSORS}
\end{center}
\setcounter{equation}{0}
\setcounter{figure}{0}
\setcounter{table}{0}
\setcounter{page}{1}
\setcounter{section}{0}
\makeatletter
\renewcommand{\theequation}{S\arabic{equation}}
\renewcommand{\thefigure}{S\arabic{figure}}

\section{Analysis of the number of runs necessary for expected convergence time estimation}
It is crucial to understand how reliable an estimator is when performing estimations. In this supplementary material, we analyse the choice of random independent iterations needed to estimate the expected time of convergence of the swarms. We believe an estimation is correct when the empirical average has stabilised, in a qualitative sense, to convergence as enforced by the law of large numbers.

\subsection{Discrete dynamics}
\label{an subsec: Discrete analysis nb runs}

We denote $t_{cv,i}^{Dis}$ the empirical convergence time of trial $i$, which can be seen as the $i$-th independent realisation of the random variable giving the convergence time $t_{cv}^{Dis}$, and
\begin{equation*}
    \hat{t}_{cv,k}^{Dis} = \frac{1}{k}\sum\limits_{i=1}^k t_{cv,i} ^{Dis}
\end{equation*}
the estimated expected convergence time using the first $k$ independent trials. We analyse $\hat{t}_{cv,k}^{Dis}$ as a function of $k$ and work for the final estimation with a number $k$ for which we have qualitatively reached convergence. Furthermore, we also study the evolution of the distribution of realisations of $t_{cv}^{Dis}$ depending on the number of trials so that for our choice of number of trials this distribution has also qualitatively converged.

Analysis of the evolution of $\hat{t}_{cv,k}^{Dis}$ with respect to $k$ is done in \cref{fig: Discrete_convergenceTime_VS_cumsum_trials}. Empirically, the average convergence time stabilises after a few hundred rounds and has reached convergence using $1{,}000$ trials. Analysis of the evolution of the distributions of $t_{cv}^{Dis}$ depending on the number of trials is done in \cref{fig:discrete histograms n}. The normalised distributions have qualitatively reached convergence using $1{,}000$ trials. We thus decided to use $1{,}000$ simulations in order to estimate average convergence times. Furthermore, the distribution of the convergence time around the mean converges to a symmetric Gaussian-like curve, implying that we can fully summarize the empirical distributions at convergence with the empirical mean and with the traditional empirical unbiased standard deviation estimator $\hat{\sigma}_k^{Dis}$
\begin{equation*}
    \hat{\sigma}_k^{Dis} = \sqrt{\frac{1}{n-1}\sum\limits_{i=1}^k(t_{cv,i}^{Dis} - \hat{t}_{cv,k}^{Dis})}
\end{equation*}
choosing $k = 1{,}000$.

\begin{figure}[htbp]
  \centering
    \includegraphics[width=0.75\textwidth]{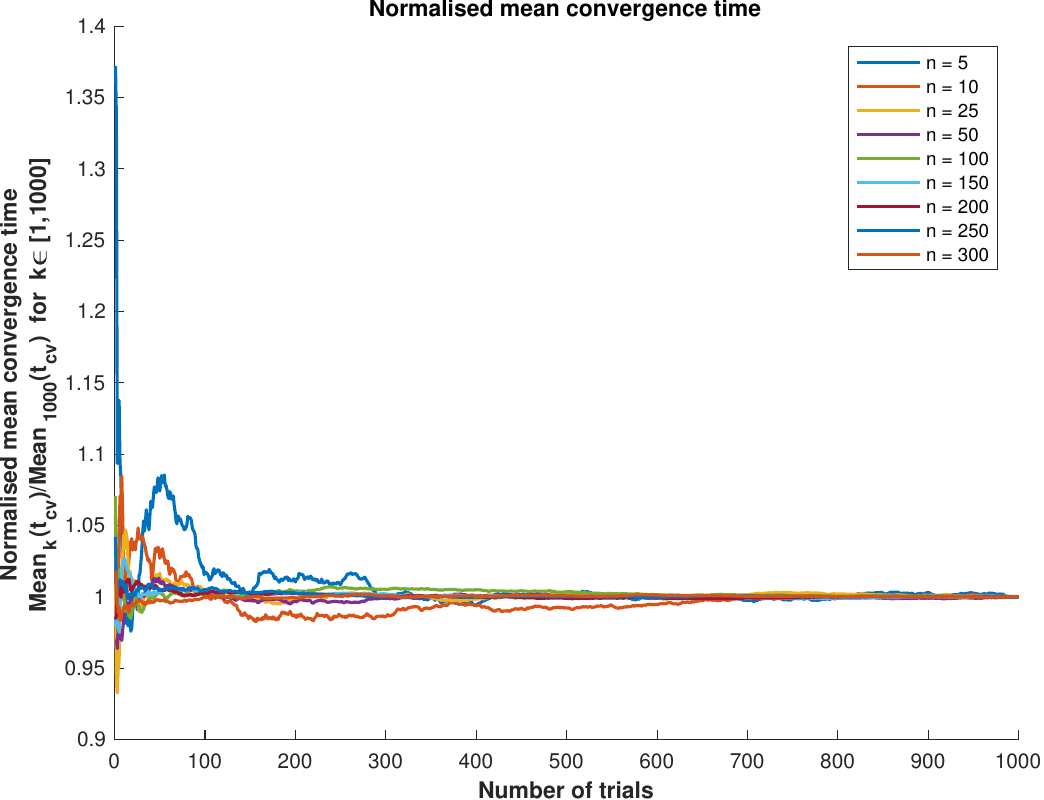}
    \caption{Normalised mean convergence time vs. the number of trials used for computing the mean. Normalisation is done by dividing the mean values by the mean value using all the data (1{,}000 trials). All simulations were set to an initial random spread in a $50$ by $50$ area with unit step-size. Convergence time was taken for the first time when all agents were gathered in a circle of radius $1$.}
      \label{fig: Discrete_convergenceTime_VS_cumsum_trials}
\end{figure}

\begin{figure}[tbhp]
    \centering
    \subfloat[$n=5$]{\label{fig: Dicrete_hist_n_5}\includegraphics[width=0.4\textwidth]{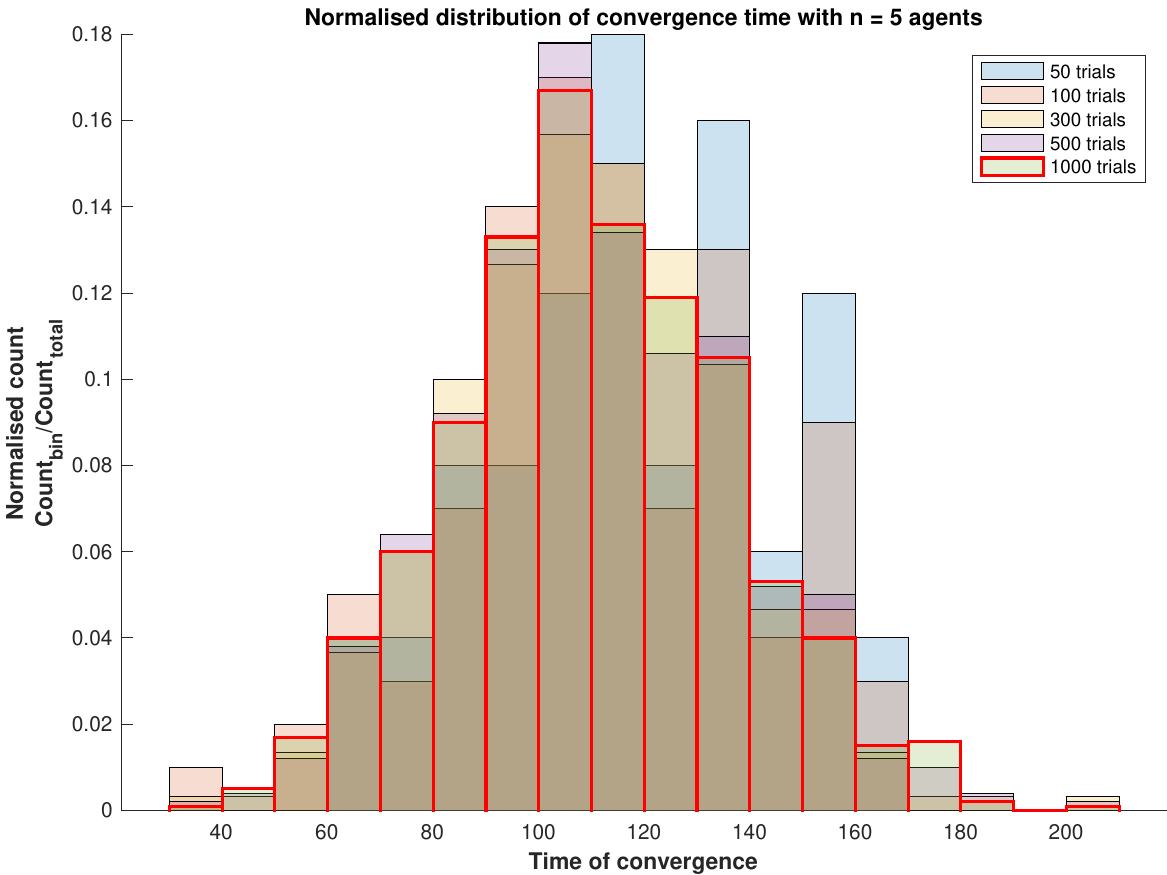}}~\quad
    \subfloat[$n=50$]{\label{fig: Dicrete_hist_n_50}\includegraphics[width=0.4\textwidth]{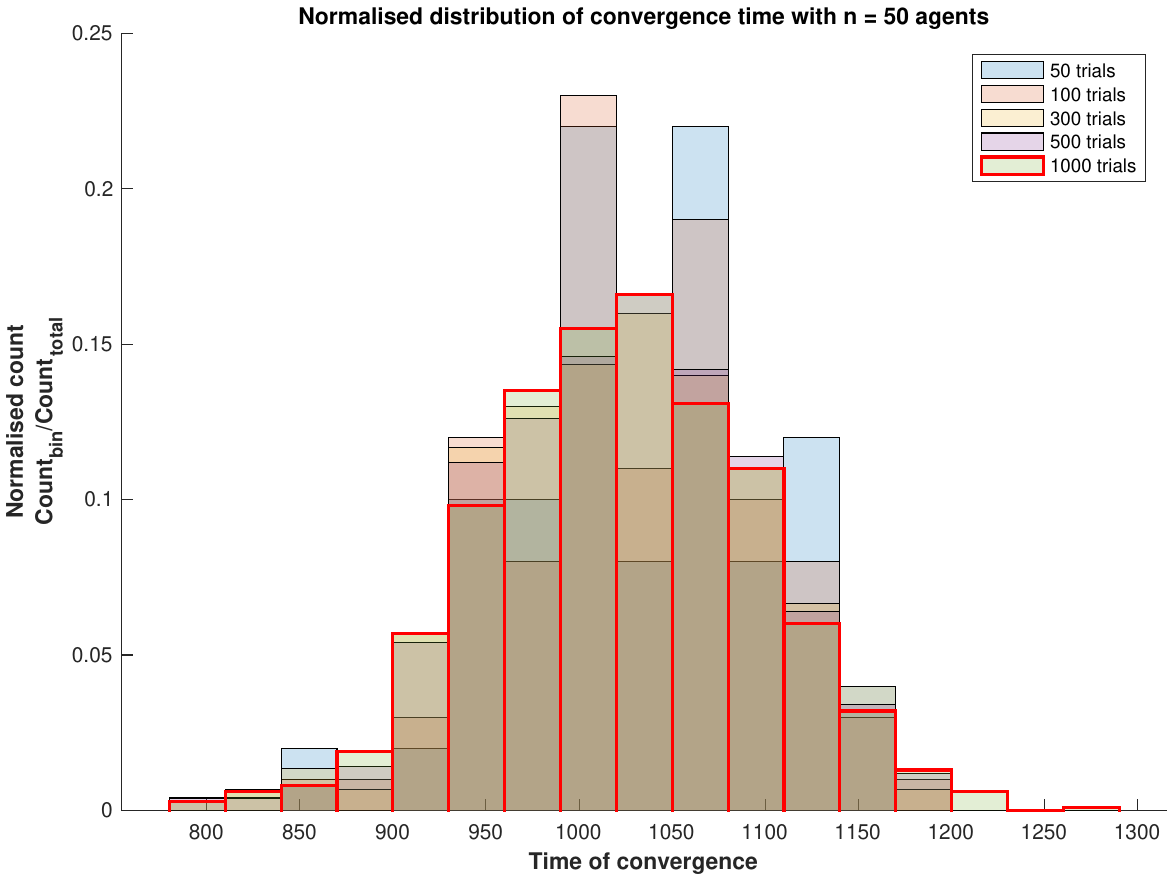}}\\
    \subfloat[$n=150$]{\label{fig: Dicrete_hist_n_150}\includegraphics[width=0.4\textwidth]{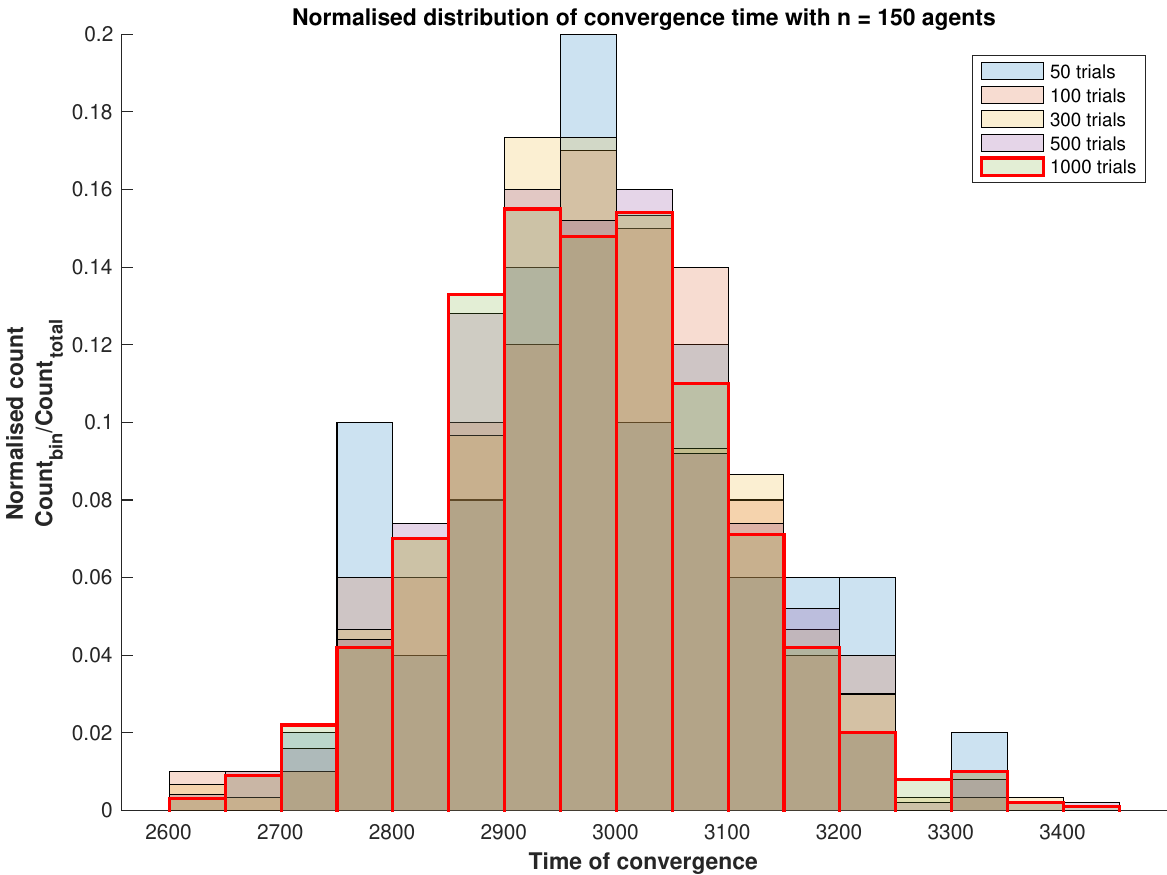}}
    \caption{Histograms of the convergence time using different number of trials for various number of agents $n$. The histograms are normalised by their total number of counts (the number of trials) for comparison. Simulation were set to an initial random spread of $n$ agents in a $50$ by $50$ area with unit step-size.}
    \label{fig:discrete histograms n}
\end{figure}

\subsection{Continuous dynamics}
\label{an subsec: Continuous analysis nb runs}

Similarly to the discrete case, we denote $t_{cv,i}^{Con}$ the empirical convergence time of trial $i$, which can be seen as the $i$-th independent realisation of the random variable giving the convergence time $t_{cv}^{Con}$, and
\begin{equation*}
    \hat{t}_{cv,k}^{Con} = \frac{1}{k}\sum\limits_{i=1}^k t_{cv,i} ^{Con}
\end{equation*}
the estimated expected convergence time using the first $k$ independent trials. We perform the same analysis of $\hat{t}_{cv,k}^{Con}$ and of the distribution of the realisation $t_{cv}^{Con}$ as in the discrete case.

Analysis of the evolution of $\hat{t}_{cv,k}^{Con}$ with respect to $k$ is done in \cref{fig:Continuous_convergenceTime_VS_cumsum_trials_fixed_n_then_delta}. Empirically, the average convergence time stabilises after a few hundred rounds and has reached convergence using $1{,}000$ trials. Analysis of the evolution of the distributions of $t_{cv}^{Con}$ depending on the number of trials is done in \cref{fig:continuous histograms n and delta}. The normalised distributions have qualitatively reached convergence using $1{,}000$ trials. We thus decided to use $1{,}000$ simulations in order to estimate average convergence times. Furthermore, the distribution of the convergence time around the mean converges to a symmetric Gaussian-like curve, implying that we can fully summarize the empirical distributions at convergence with the empirical mean and with the traditional empirical unbiased standard deviation estimator $\hat{\sigma}_k^{Con}$
\begin{equation*}
    \hat{\sigma}_k^{Con} = \sqrt{\frac{1}{n-1}\sum\limits_{i=1}^k(t_{cv,i}^{Con} - \hat{t}_{cv,k}^{Con})}
\end{equation*}
choosing $k = 1{,}000$.

\begin{figure}[tbhp]
    \centering
    \subfloat[$n=10$]{\label{fig: Continuous_convergenceTime_VS_cumsum_trials_fixed_n}\includegraphics[width=0.75\textwidth]{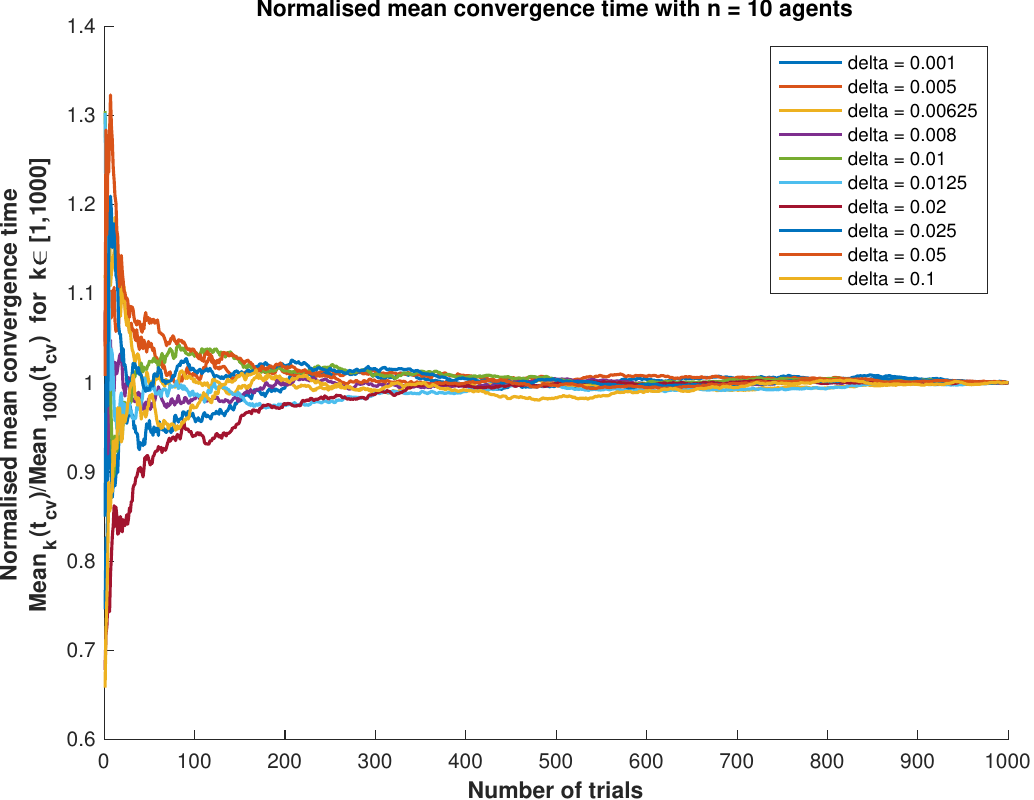}} \\
    \subfloat[$\delta=0.02$]{\label{fig: Continuous_convergenceTime_VS_cumsum_trials_fixed_delta}\includegraphics[width=0.75\textwidth]{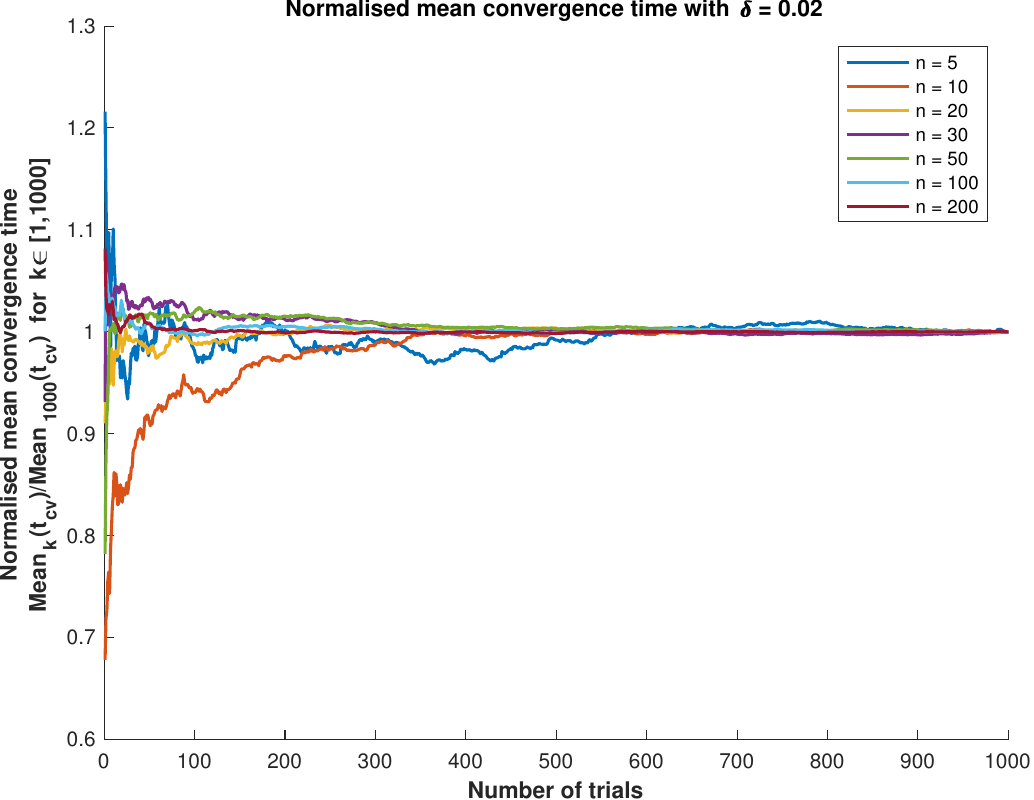}}
    \caption{Normalised mean convergence time vs. the number of trials used for computing the mean for the piecewise continuous dynamics algorithm. Normalisation is done by dividing the mean values by the mean value using all the data (1{,}000 trials). Convergence time was taken for the first time when all agents were gathered in a circle of radius $\delta$. In \cref{fig: Continuous_convergenceTime_VS_cumsum_trials_fixed_n}, all simulations were set to an initial random spread of a fixed number of $10$ agents in a $1$ by $1$ area. In \cref{fig: Continuous_convergenceTime_VS_cumsum_trials_fixed_delta}, all simulations were set to an initial random spread in a $1$ by $1$ area with agents' blind-zone of fixed radius $\delta = 0.02$.}
    \label{fig:Continuous_convergenceTime_VS_cumsum_trials_fixed_n_then_delta}
\end{figure}

\begin{figure}[tbhp]
    \centering
    \subfloat[$n=5$, $\delta=0.02$]{\label{fig: Continuous_hist_n_5_delta_0.02}\includegraphics[width=0.45\textwidth]{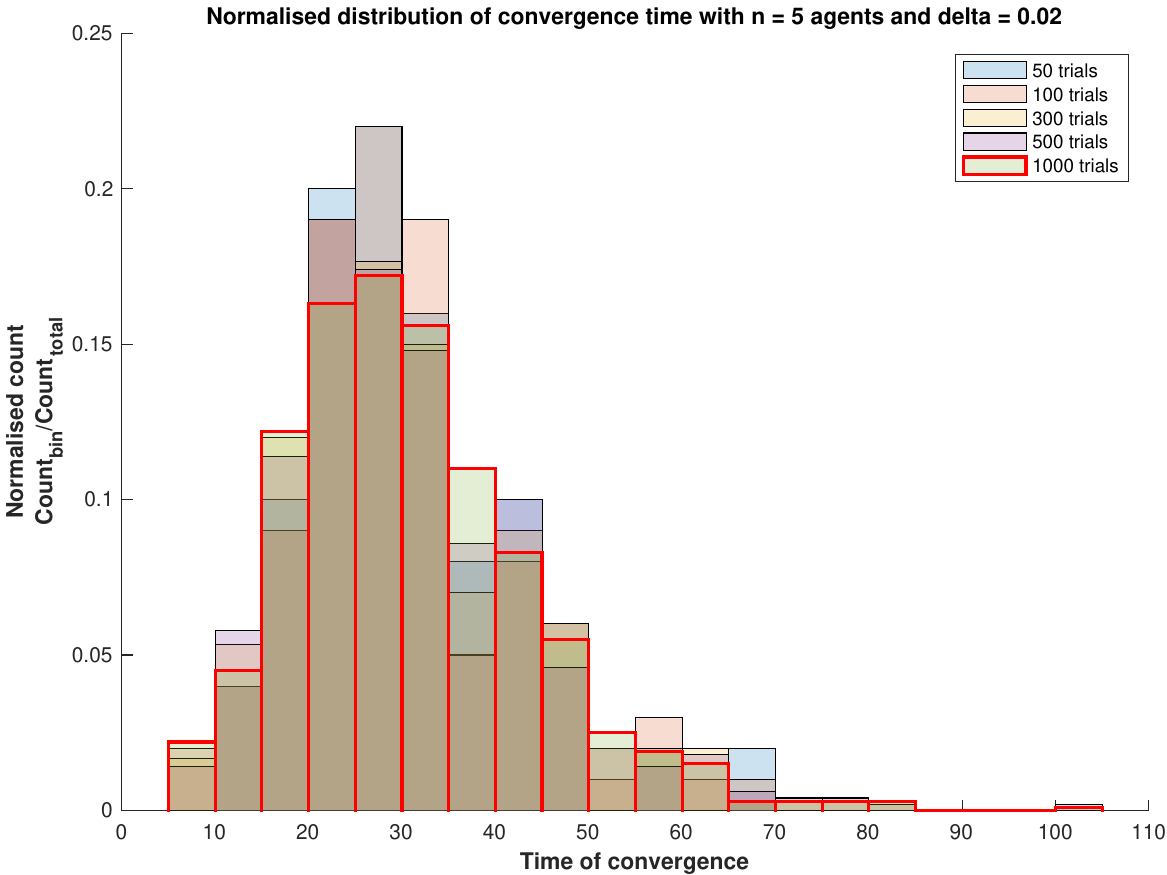}}~\quad
    \subfloat[$n=10$, $\delta=0.02$]{\label{fig: Continuous_hist_n_10_delta_0.02}\includegraphics[width=0.45\textwidth]{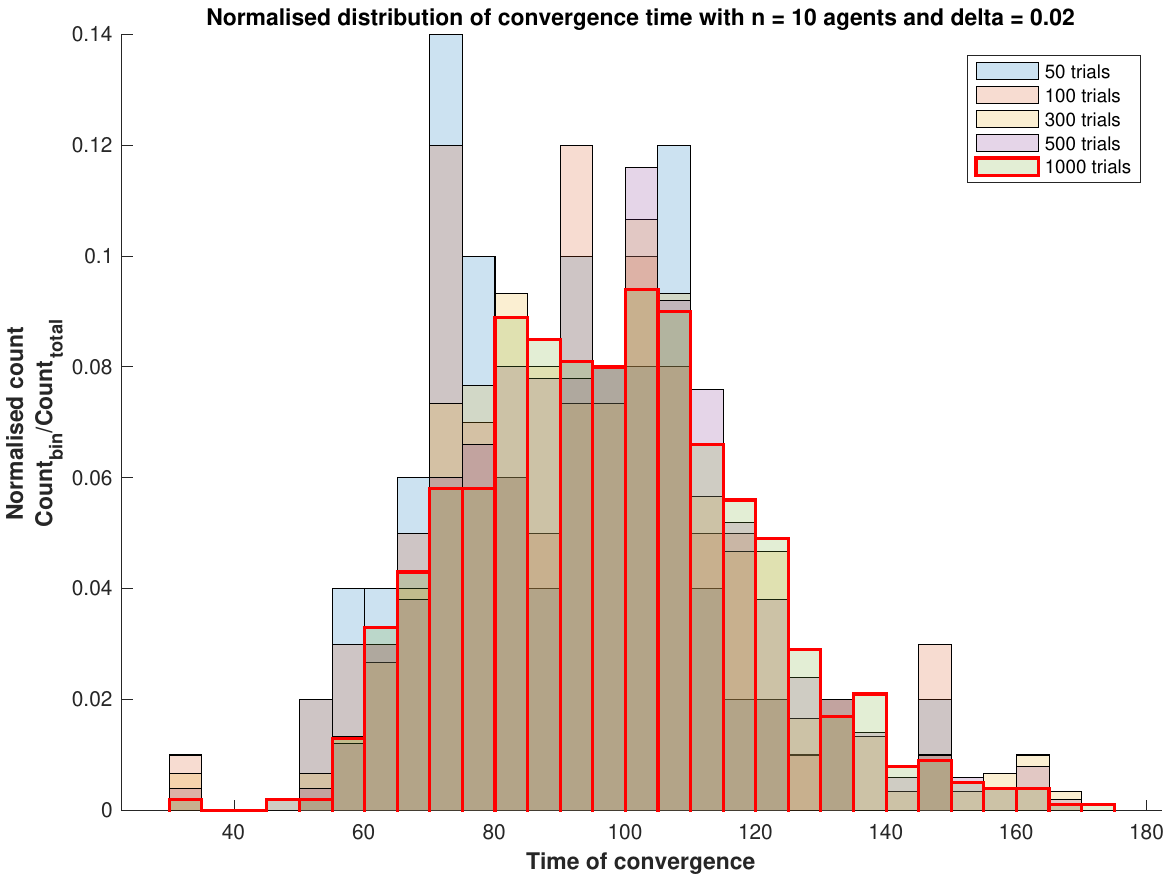}}\\
    \subfloat[$n=50$, $\delta=0.02$]{\label{fig: Continuous_hist_n_50_delta_0.02}\includegraphics[width=0.45\textwidth]{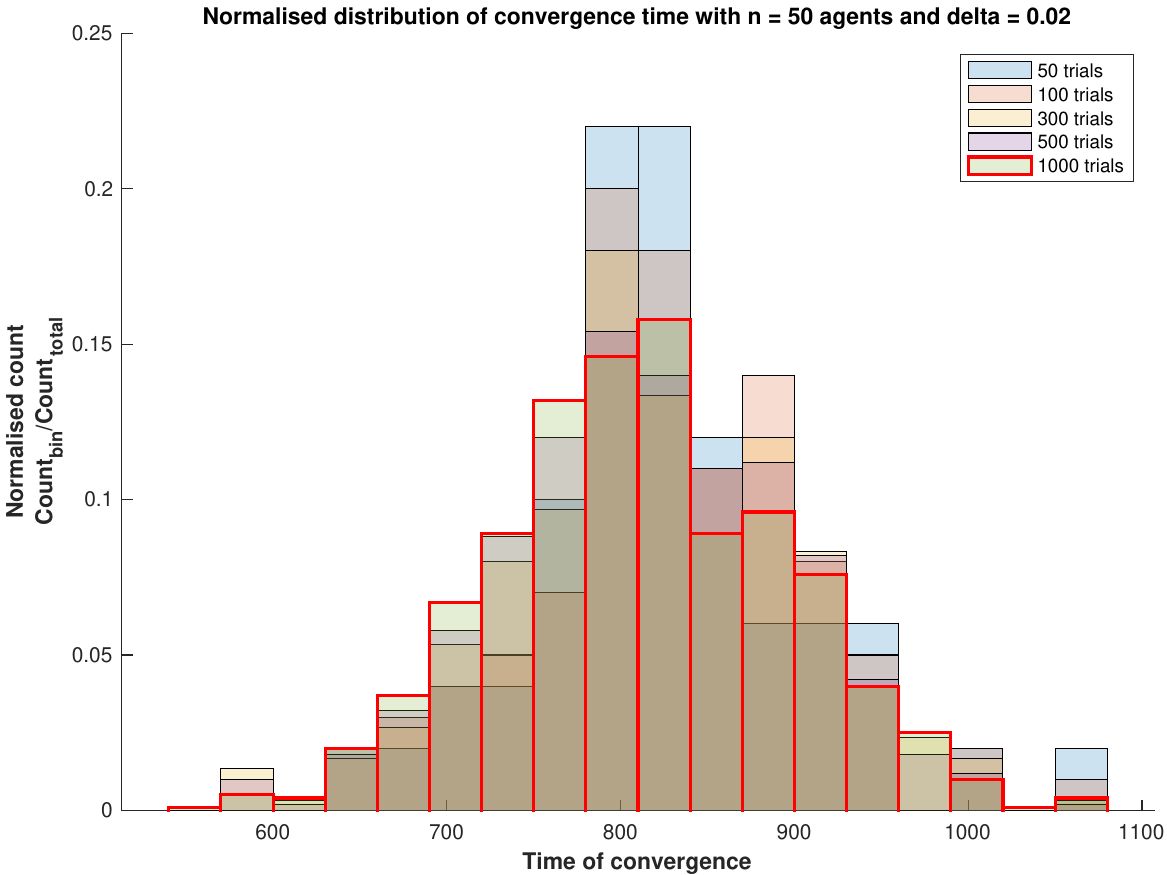}}~\quad
    \subfloat[$n=200$, $\delta=0.02$]{\label{fig: Continuous_hist_n_200_delta_0.02}\includegraphics[width=0.45\textwidth]{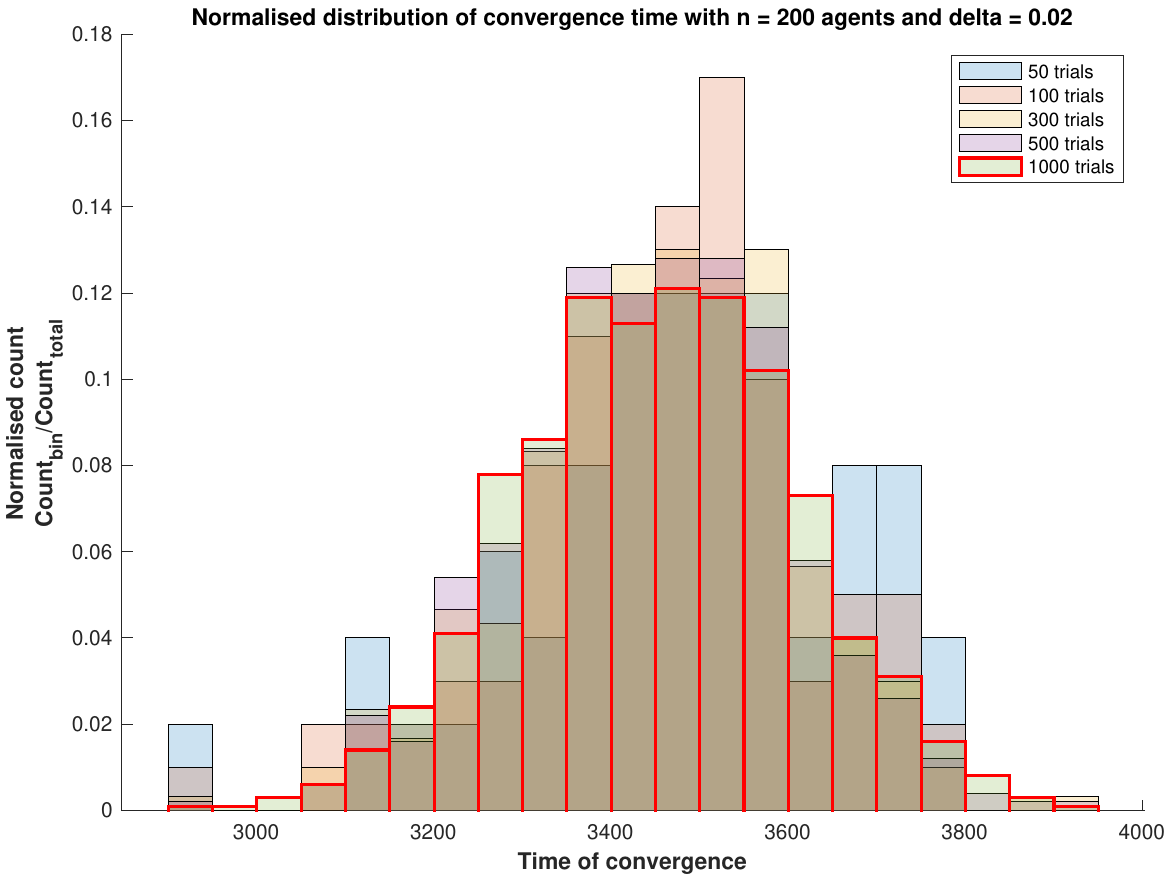}}\\
    \subfloat[$n=10$, $\delta=0.005$]{\label{fig: Continuous_hist_n_10_delta_0.005}\includegraphics[width=0.45\textwidth]{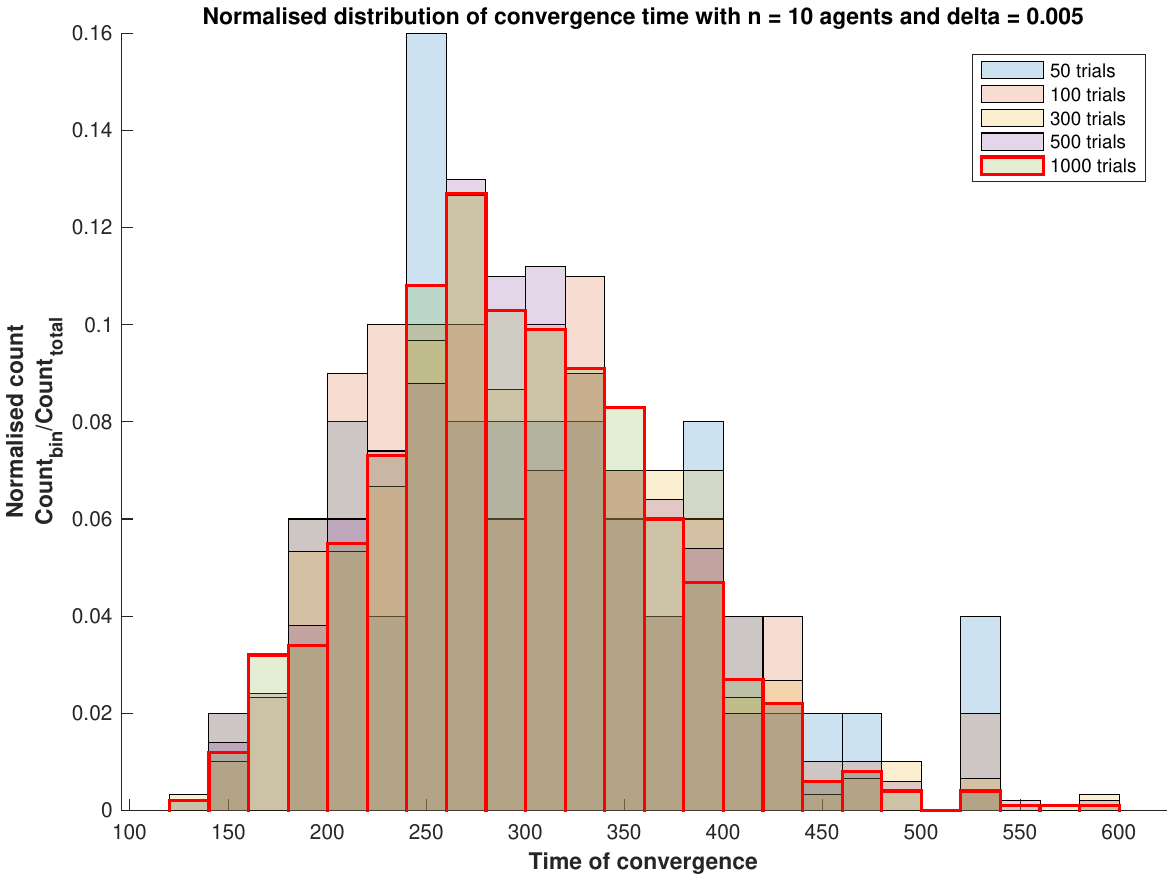}}~\quad
    \subfloat[$n=10$, $\delta=0.001$]{\label{fig: Continuous_hist_n_10_delta_0.001}\includegraphics[width=0.45\textwidth]{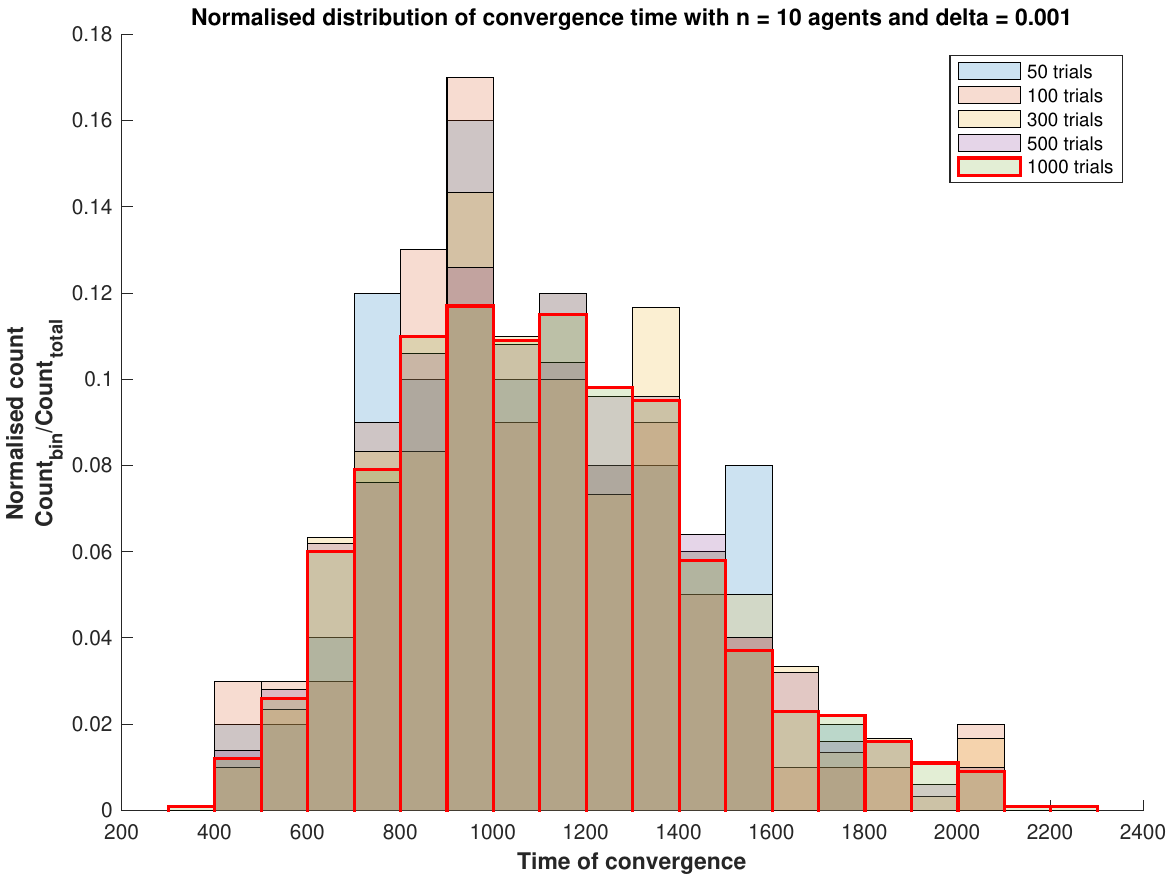}}
    \caption{Histograms of the convergence time using different number of trials for the piecewise continuous dynamics algorithm. The histograms are normalised by their total number of counts (the number of trials) for comparison. Simulations in each case were set to an initial random spread, in a $1$ by $1$ area, of $n$ agents with agents' blind-zone of radius $\delta$. The smaller step-size $dt$ is $0.01$ (respectively $0.002$ and $0.0005$) for $\delta$ equal to $0.02$ (respectively $0.005$ and $0.001$).}
    \label{fig:continuous histograms n and delta}
\end{figure}

\end{document}